\newtheorem{theorem}{Theorem}
\newtheorem{corollary}[theorem]{Corollary}
\newtheorem{definition}[theorem]{Definition}
\newtheorem{lemma}[theorem]{Lemma}
\newtheorem{assumption}[theorem]{Assumption}
\newtheorem{proposition}[theorem]{Proposition}
\theoremstyle{remark}
\newtheorem{example}[theorem]{Example}
\newtheorem{remark}[theorem]{Remark}
\title{Every hierarchy of beliefs is a type\thanks{I thank Ferenc Forgó, Aviad Heifetz and Zoltán Kánnai for their suggestions and remarks. Naturally, all errors are mine. This work is supported by the János Bolyai Research Scholarship of the Hungarian Academy of Sciences and by grant OTKA 72856.}}
\author{Miklós Pintér \\
Corvinus University of Budapest\thanks{Department of Mathematics, Corvinus University of Budapest, 1093 Hungary, Budapest, Fővám tér 13-15., miklos.pinter@uni-corvinus.hu}}
\begin{document}

\maketitle

\begin{abstract}
When modeling game situations of incomplete information one usually considers the players' hierarchies of beliefs, a source of all sorts of complications. \cite{Harsanyi196768}'s idea henceforth referred to as the "Harsányi program" is that hierarchies of beliefs can be replaced by "types". The types constitute the "type space". In the purely measurable framework \cite{HeifetzSamet1998} formalize the concept of type spaces and prove the existence and the uniqueness of a universal type space. \cite{Meier2001} shows that the purely measurable universal type space is complete, i.e., it is a consistent object. With the aim of adding the finishing touch to these results, we will prove in this paper that in the purely measurable framework every hierarchy of beliefs can be represented by a unique element of the complete universal type space.
\end{abstract}


\section{Introduction}

It is recommended that the models of incomplete information situations be able to handle the players' hierarchies of beliefs, e.g. player $1$'s beliefs about the parameters of the game, player $1$'s beliefs about player $2$'s beliefs about the parameters of the game, player $1$'s beliefs about player $2$'s beliefs about player $1$'s beliefs about the parameters of the game, and so on to infinity. The explicit use of hierarchies of beliefs\footnote{In this paper we use the terminology hierarchy of beliefs instead of the longer coherent hierarchy of beliefs.}, however, makes the analysis very cumbersome, hence it is desirable that those should not appear explicitly in the models.

In order to make the models of incomplete information situations more amenable to analysis, \cite{Harsanyi196768} proposes to replace the hierarchies of beliefs by types. There are two approaches to examine the connection between types and hierarchies of beliefs. The first one, when we take a type space as a primitive, in this case each player's each type in the given type space defines a hierarchy of beliefs of the given player (see e.g. \cite{BattigalliSiniscalchi1999} or the proof of Proposition \ref{atetel2} in Section \ref{proof}). 

In the second case, we take hierarchies of beliefs first, then we "construct" a type space of the considered hierarchies of beliefs (see e.g. \cite{MertensZamir1985}, \cite{BrandenburgerDekel1993}, \cite{Heifetz1993}, \cite{MertensSorinZamir1994}, \cite{Pinter2005} among others). It is an open question in this field whether the two approaches are equivalent, i.e., whether there is a type space containing every hierarchy of beliefs or equivalently whether every hierarchy of beliefs is a type.   

Harsányi's main concept is that the types can substitute for the hierarchies of beliefs, and all types can be collected into an object on which the probability measures represent the players' (subjective) beliefs. Henceforth, we call this method of modeling the "Harsányi program".

However, at least two questions come up in connection with the Harsányi program: (1) Is the concept of types itself appropriate? (2) Can every hierarchy of beliefs be a type?

Question (1) consists of two subquestions. First, can all types be collected into one object? The concept of universal type space formalizes this requirement: the universal type space in a certain category of type spaces is a type space (a) which is in the given category and (b) into which, every type space of the given category can be "embedded" in a unique way. In other words, the universal type space is the most general type space, it contains all type spaces (all types). In the purely measurable framework \cite{HeifetzSamet1998} introduces the concept of universal type space and proves that the universal type space exists and is unique.

Second, can any probability measure on the object of the collected types (states of the world) be a (subjective) belief? \cite{Brandenburger2003} introduces the notion of complete type spaces: a type space is complete, if its type functions are surjective (onto). Roughly speaking, a type space is complete, if each probability measure on the object consisting of the types of the model is assigned to a type. \cite{Meier2001} shows that the purely measurable universal type space is complete. 

We can now conclude that the answer for question (1) is affirmative, i.e., in the purely measurable framework the complete universal type space exists.

Question (2) is whether the universal type space contains every hierarchy of beliefs. Mathematically, the problem is as follows: every hierarchy of beliefs defines an inverse system of probability measure spaces; the question is the following: do these inverse systems of probability measure spaces have inverse limits? Kolmogorov Extension Theorem is about this problem, however it calls for topological concepts, e.g. for (inner compact) regular probability measures. Therefore up to now, all papers on this problem, e.g. \cite{BogeEisele1979}, \cite{MertensZamir1985}, \cite{BrandenburgerDekel1993}, \cite{Heifetz1993}, \cite{MertensSorinZamir1994}, \cite{Pinter2005} among others, use topological type spaces instead of purely measurable ones. Although these papers give a positive answer to question (2), i.e., their type spaces contain all "considered" hierarchies of beliefs, very recently \cite{Pinter2010b} shows that there is no universal topological type space, i.e., there is no topological type space which contains every topological type space, therefore the answer for question (1) is negative in this case, put it differently, in the topological framework the Harsányi program fails.

\begin{table}[h!]
\begin{equation*}
\begin{array}{l|c|c}
& \text{Purely measurable setting} & \text{Topological setting} \\
\hline
\text{Question (1)} & \surd & \emptyset \\
\text{Question (2)} & \surd & \surd
\end{array}
\end{equation*}
\caption{The Harsányi program}
\label{abra0}
\end{table}

In the above mentioned papers the authors answer question (2) (affirmatively) by constructing an object consisting of all considered hierarchies of beliefs, called beliefs space, and show that the constructed beliefs space defines (is equivalent to) a topological type space.

It is worth mentioning that the above papers use different concepts of hierarchies of beliefs; e.g. in \cite{MertensZamir1985} the parameter space is a compact topological space, the beliefs are (inner closed) regular probability measures, and the events (for higher order beliefs) are the Borel sets of weak* topologies, in \cite{BrandenburgerDekel1993} the parameter space is a Polish space, the beliefs are probability measures and the events (for higher order beliefs) are the Borel sets of weak* topologies, while in \cite{Heifetz1993} the parameter space is a Hausdorff topological space, the beliefs are (inner compact) regular probability measures, and the events (for higher order beliefs) are the Borel sets of weak* topologies, and so on. Therefore, the concept of hierarchy of beliefs is different from paper to paper, from setting to setting.

In this paper we work with the category of type spaces introduced by \cite{HeifetzSamet1998}, i.e., we consider the purely measurable framework (with purely measurable type spaces, see Definition \ref{tipuster}, and with purely measurable hierarchies of beliefs, see Definition \ref{velemenyter}). Our main result is that in the purely measurable framework every hierarchy of beliefs is a type, put it differently, the Harsányi program works in the purely measurable setting.

The strategy of the proof is the same as in the above mentioned papers on topological type spaces, i.e., we construct an object such that (1) it contains every hierarchy of beliefs (see Definition \ref{velemenyter}) and (2) it generates a type space. More exactly, it is showed that the purely measurable beliefs space is equivalent to the purely measurable complete universal type space, we mean those are measurable isomorphic.

As we have already mentioned, the proof that the universal type space contains every hierarchy of beliefs is based on the Kolmogorov Extension Theorem. Since we work in the purely measurable framework, we avoid using topological concepts, and apply a non-topological variant of the Kolmogorov Extension Theorem. Mathematically speaking, we take a new result of \cite{Pinter2010a} to show that the inverse systems of probability measure spaces under consideration (the purely measurable hierarchies of beliefs) have inverse limits.

The intuition behind our result is that the purely measurable hierarchies of beliefs are special stochastic processes. No new information enters the process, i.e., the players do not learn anything new (about the states of the world) when they are ``thinking`` (considering their hierarchies of beliefs). In general, the hierarchies of beliefs, however, do not have this property, e.g. in \cite{HeifetzSamet1999}'s example the players can learn new things (about the states of the world) when they are ``thinking". 

In our opinion, the purely measurable approach, where the players do not learn anything new about the states of the world, very accurately reflects the intuition that the hierarchies of beliefs are ``only`` descriptions of the players' beliefs. Moreover, this accuracy makes it possible to achieve our positive result.  

One important remark: our result does not contradict \cite{HeifetzSamet1999}'s counterexample, because their non-type hierarchy of beliefs is not in the purely measurable beliefs space (for details see Section \ref{comparison}).

The paper is organized as follows: 
Section \ref{typespace} presents the technical setup and some basic results of the field. Our main result (Theorem \ref{fotetel}) comes on stage in Section \ref{bs}, and in Section \ref{proof} we present the proof of Theorem \ref{fotetel}. Section \ref{comparison} is for the detailed discussion of the connection between our result and two other papers \cite{HeifetzSamet1999} and \cite{Pinter2010b}. The last section briefly concludes. The mathematics of the proof of Theorem \ref{fotetel} is relegated to Appendix \ref{appendix}. 

\section{The type space}\label{typespace}

\textit{Notation}: Let $N$ be the set of the players, w.l.o.g. we can assume that $0 \notin N$, and let $N_0 \circeq N \cup \{0\}$, where $0$ is for the nature as a player.

Let $\# A$ be the cardinality of set $A$. For any set system $\mathcal{A} \subseteq \mathcal{P} (X)$: $\sigma (\mathcal{A})$ is the coarsest $\sigma$-field which contains $\mathcal{A}$. Let $(X,\mathcal{M})$ and $(Y,\mathcal{N})$ be measurable spaces, then $(X \times Y,\mathcal{M} \otimes \mathcal{N})$ or briefly $X \otimes Y$ is the measurable space on the set $X \times Y$ equipped with the $\sigma$-field $\sigma (\{A \times B \mid A \in \mathcal{M}, \ B \in \mathcal{N}\})$.

The measurable spaces $(X,\mathcal{M})$ and $(Y,\mathcal{N})$ are measurable isomorphic if there is a bijection $f$ between them such that both $f$ and $f^{-1}$ are measurable.

For any measurable space $(X,\mathcal{M})$ and for any point $x \in X$: $\delta_x$ is for the Dirac measure on $(X,\mathcal{M})$ concentrated at point $x$.

\bigskip

In the following, we use terminologies that are very similar to \cite{HeifetzSamet1998}'s.

\begin{definition}\label{merhetoseg}
Let $(X,\mathcal{M})$ be a measurable space and denote $\Delta(X,\mathcal{M})$ the set of the probability measures on it. Then the $\sigma$-field $\mathcal{A}^\ast$ on $\Delta(X,\mathcal{M})$ is defined as follows:

\begin{equation*}
\mathcal{A}^\ast \circeq \sigma(\{\{ \mu \in \Delta(X,\mathcal{M}) \mid \mu (A) \geq p\}, \ A \in \mathcal{M}, \ p \in [0,1]\}) \ .
\end{equation*}

\noindent In other words, $\mathcal{A}^\ast$ is the smallest $\sigma$-field among the $\sigma$-fields which contain the sets $\{ \mu \in \Delta(X,\mathcal{M}) \mid \mu (A) \geq p\}$, where $A \in \mathcal{M}$ and $p \in [0,1]$ are arbitrarily chosen.
\end{definition}

In incomplete information situations it is recommended to consider events like a player believes with probability at least $p$ that a certain event occurs (beliefs operator see e.g. \cite{Aumann1999b}). For this reason, for any $A \in \mathcal{M}$ and for any $p \in [0,1]$: $\{ \mu \in \Delta(X,\mathcal{M}) \mid \mu (A) \geq p\}$ must be an event (a measurable set). To keep the class of events as small (coarse) as possible, we use the $\mathcal{A}^\ast$ $\sigma$-field.

Notice that $\mathcal{A}^\ast$ is not a fixed $\sigma$-field, it depends on the measurable space on which the probability measures are defined. Therefore $\mathcal{A}^\ast$ is similar to the $weak^\ast$ topology, which depends on the topology of the base (primal) space.

\begin{assumption}\label{ass1}
Let the parameter space $(S,\mathcal{A})$ be a measurable space.
\end{assumption}

Henceforth we assume that $(S,\mathcal{A})$ is a fixed parameter space which contains all states of the nature. 

\begin{definition}\label{vilagallapot}
Let $\Omega$ be the space of the states of the world and for each $i \in N_0$: let $\mathcal{M}_i$ be a $\sigma$-field on $\Omega$. The $\sigma$-field $\mathcal{M}_i$ represents player $i$'s information, $\mathcal{M}_0$ is for the information available for the nature, hence it is the representative of $\mathcal{A}$, the $\sigma$-field of the parameter space $S$. Let $\mathcal{M} \circeq \sigma (\bigcup \limits_{i \in N_0} \mathcal{M}_i)$, the smallest $\sigma$-field which contains all $\sigma$-fields $\mathcal{M}_i$.
\end{definition}

Each point in $\Omega$ provides a complete description of the actual state of the world. It includes both the state of nature and the players' states of the mind. The different $\sigma$-fields are for modeling the informedness of the players, they have the same role as e.g. the partitions in \cite{Aumann1999a}'s paper have. Therefore, if $\omega,\omega' \in \Omega$ are not distinguishable \footnote{Let $(X,\mathcal{T})$ be a measurable space and $x,y \in X$ be two points. $x$ and $y$ are measurably indistinguishable if $\forall A \in \mathcal{T}$: $(x \in A) \Leftrightarrow (y \in A)$.} in the $\sigma$-field $\mathcal{M}_i$, then player $i$ is not able to discern the difference between them, i.e., she believes the same things and behaves in the same way at the two states $\omega$ and $\omega'$. $\mathcal{M}$ represents all information available in the model, it is the $\sigma$-field got by pooling the information of the players and the nature.

For the sake of brevity, henceforth -- if it does not make confusion -- we do not indicate the $\sigma$-fields. E.g. instead of $(S,\mathcal{A})$ we write $S$, or $\Delta(S)$ instead of $(\Delta(S,\mathcal{A}),\mathcal{A}^\ast)$. However, in some cases we refer to the non-written $\sigma$-field: e.g. $A \in \Delta(X,\mathcal{M})$ is a set of $\mathcal{A}^\ast$, i.e., it is a measurable set in the measurable space $(\Delta(X,\mathcal{M}),\mathcal{A}^\ast)$, but $A \subseteq \Delta(X,\mathcal{M})$ keeps its original meaning: $A$ is a subset of $\Delta(X,\mathcal{M})$.

\begin{definition}\label{tipuster}
Let $(\Omega,\mathcal{M})$ be the space of the states of the world (see Definition \ref{vilagallapot}). The type space based on the parameter space $S$ is a tuple $(S,\{(\Omega,\mathcal{M}_i)\}_{i \in N_0},$ $g,\{f_i\}_{i \in N})$, where

\begin{enumerate}
\item $g : \Omega \rightarrow S$ is $\mathcal{M}_0$-measurable,

\item for each $i \in N$: $f_i : \Omega \to \Delta(\Omega,\mathcal{M}_{-i})$ is $\mathcal{M}_i$-measurable,

\item for each $i \in N$, $\omega \in \Omega$, $A \in \mathcal{M}_{-i}$ such that there exists $A' \in \mathcal{M}_i$, $\omega \in A'$ and $A' \subseteq A$: $f_i (\omega) (A) = 1$,
\end{enumerate}

\noindent  where $\mathcal{M}_{-i} \circeq \sigma(\bigcup \limits_{j \in N_0 \setminus \{i\}} \mathcal{M}_j)$.
\end{definition}

Put Definition \ref{tipuster} differently, $S$ is the parameter space, it contains the "types" of the nature. $\mathcal{M}_i$ represents the information available for player $i$, hence it corresponds to the concept of types (see \cite{Harsanyi196768}). $f_i$ is the type function of player $i$, it assigns player $i$'s (subjective) beliefs to her types.

The above definition of type space differs from \cite{HeifetzSamet1998}'s concept, but it is similar to the type space of \cite{Meier2001} and \cite{Meier2008}. We do not use a Cartesian product space, and refer only to the $\sigma$-fields. By following strictly \cite{HeifetzSamet1998}'s paper, if one takes the Cartesian product of the parameter space and the type sets, and defines the $\sigma$-fields as the $\sigma$-fields induced by the coordinate projections (e.g. $\mathcal{M}_0$ is induced by the coordinate projection $pr_0 : S \times \times_{i \in N} T_i \rightarrow S$, for the notations see their paper), then she gets our concept. However, if the Cartesian product is not used directly, then it is necessary to put the parameter space into the type space in some way. For doing so we use $g$ (\cite{MertensZamir1985} use a similar formalism), hence $g$ and $pr_0$ have the same role in the two formalizations, in ours and in \cite{HeifetzSamet1998} respectively.

A further difference between the two formalizations lies in the role of the parameter space. While in \cite{HeifetzSamet1998} the entire parameter space ''is in'' the space of the states of the world, in our approach that is not required. 
We emphasize that this difference is not relevant.

\begin{definition}\label{tt2}
The mapping $\varphi : \Omega \rightarrow \Omega'$ is a type morphism between type spaces $(S,\{(\Omega,\mathcal{M}_i)\}_{i \in N_0},g,\{f_i\}_{i \in N})$ and $(S,\{(\Omega',\mathcal{M}'_i)\}_{i \in N_0},g',\{f'_i\}_{i \in N})$ if

\begin{enumerate}
\item $\varphi$ is an $\mathcal{M}$-measurable mapping,

\item Diagram \eqref{diag1} is commutative, i.e., for each $\omega \in \Omega$: $g' \circ \varphi (\omega) = g (\omega)$,

\begin{equation}\label{diag1}
\begin{diagram}
\Omega  & & \\
\dTo^\varphi & \rdTo^{g} \\
\Omega' & \rTo^{g'} & S \\
\end{diagram}
\end{equation}

\item for each $i \in N$: Diagram \eqref{morphism} is commutative, i.e., for each $i \in N$, $\omega \in \Omega$: $f'_i \circ \varphi (\omega)  = \hat{\varphi}_i \circ f_i (\omega)$,

\begin{equation}\label{morphism}
\begin{diagram}
\Omega & \rTo^{f_i} & \Delta (\Omega,\mathcal{M}_{-i}) \\
\dTo^{\varphi} & \mspace{200mu} & \dTo^{\hat{\varphi}_i} \\
\Omega' & \rTo^{f'_i} & \Delta (\Omega',\mathcal{M}'_{-i}) \\
\end{diagram}
\end{equation}

\noindent where $\hat{\varphi}_i : \Delta (\Omega,\mathcal{M}_{-i}) \to \Delta (\Omega',\mathcal{M}'_{-i})$ is defined as follows: for all $\mu \in \Delta (\Omega,\mathcal{M}_{-i})$, $A \in \mathcal{M}'_{-i}$: $\hat{\varphi}_i (\mu) (A) = \mu (\varphi^{-1} (A))$. It is a slight calculation to show that $\hat{\varphi}_i$ is a measurable mapping.

\end{enumerate}

\noindent $\varphi $ type morphism is a type isomorphism, if $\varphi $ is a bijection and $\varphi^{-1}$ is also a type morphism.
\end{definition}

The above definition is practically the same as \cite{HeifetzSamet1998}'s, hence all intuitions, they discussed, remain valid, i.e., the type morphism assigns type profiles from a type space to type profiles in an other type space in the way the corresponded types induce equivalent beliefs. In other words, the type morphism preserves the players' beliefs.

The following result is a direct corollary of Definitions \ref{tipuster} and \ref{tt2}.

\begin{corollary}
The type spaces based on the parameter space $S$ as objects and the type morphisms form a category. Let  $\mathcal{C}^S$ denote this category of type spaces.
\end{corollary}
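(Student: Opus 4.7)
The plan is to verify the four standard category axioms: existence of identity morphisms, closure under composition, associativity of composition, and the identity laws. Since each type space has an underlying measurable space $(\Omega,\mathcal{M})$, and type morphisms are particular measurable functions between these, associativity and the identity laws will come for free from the corresponding facts for measurable functions, so the real content is (a) that the identity $\mathrm{id}_\Omega$ satisfies clauses (1) and (2) of Definition \ref{tt2}, and (b) that the composition of two type morphisms again satisfies those two clauses.

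For the identity, I would simply note that $\mathrm{id}_\Omega$ is $\mathcal{M}$-measurable, that $g \circ \mathrm{id}_\Omega = g$ makes diagram \eqref{diag1} trivially commute, and that for each $i \in N$ and each $A \in \mathcal{M}_{-i}$ one has $f_i \circ \mathrm{id}_\Omega(\omega)(A) = f_i(\omega)(A) = f_i(\omega)(\mathrm{id}_\Omega^{-1}(A))$, so diagram \eqref{morphism} commutes.

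For composition, let
\[
\varphi : (S,\{(\Omega,\mathcal{M}_i)\}_{i\in N_0},g,\{f_i\}) \to (S,\{(\Omega',\mathcal{M}'_i)\}_{i\in N_0},g',\{f'_i\})
\]
and
\[
\psi : (S,\{(\Omega',\mathcal{M}'_i)\}_{i\in N_0},g',\{f'_i\}) \to (S,\{(\Omega'',\mathcal{M}''_i)\}_{i\in N_0},g'',\{f''_i\})
\]
be type morphisms. Measurability of $\psi \circ \varphi$ is immediate from the measurability of $\varphi$ and $\psi$. Diagram \eqref{diag1} for $\psi\circ\varphi$ reduces to $g'' \circ \psi \circ \varphi = g' \circ \varphi = g$, using the corresponding identities for $\psi$ and $\varphi$ in turn. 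For diagram \eqref{morphism}, fix $i\in N$, $\omega \in \Omega$, and $A\in\mathcal{M}''_{-i}$; applying clause (2) first for $\psi$ (with the set $A$) and then for $\varphi$ (with the set $\psi^{-1}(A)\in\mathcal{M}'_{-i}$, which lies in $\mathcal{M}'_{-i}$ because $\psi$ is measurable between the corresponding $\sigma$-fields) gives
\[
f''_i \circ (\psi\circ\varphi)(\omega)(A) = f'_i\circ\varphi(\omega)(\psi^{-1}(A)) = f_i(\omega)(\varphi^{-1}(\psi^{-1}(A))) = f_i(\omega)((\psi\circ\varphi)^{-1}(A)),
\]
which is exactly clause (2) for $\psi\circ\varphi$.

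The only point requiring a small sanity check is that $\psi$ really restricts to a measurable map $(\Omega',\mathcal{M}'_{-i})\to(\Omega'',\mathcal{M}''_{-i})$, so that the step $\psi^{-1}(A)\in\mathcal{M}'_{-i}$ is justified; this is implicit in the definition and should be pointed out but is not a genuine obstacle. Associativity and the identity laws then follow because composition of type morphisms is just composition of the underlying functions. I do not expect any of the steps to be technically difficult; the main care needed is bookkeeping of which $\sigma$-field plays which role when we pull back sets under $\varphi$ and $\psi$.
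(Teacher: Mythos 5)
Your proposal is correct and is exactly the routine verification that the paper's one-line proof (``It is a direct corollary of definitions \ref{tipuster}. and \ref{tt2}.'') leaves implicit: identities and compositions of type morphisms are type morphisms, and associativity and the unit laws are inherited from composition of functions. Your sanity check that $\psi^{-1}(A)\in\mathcal{M}'_{-i}$ for $A\in\mathcal{M}''_{-i}$ is well placed, since this is needed for clause (2) of definition \ref{tt2}. to be well-posed and is thus built into $\psi$ being a type morphism.
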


\cite{HeifetzSamet1998} introduce the concept of universal type space.

\begin{definition}\label{tt3}
The type space $(S,\{(\Omega,\mathcal{M}_i)\}_{i \in N_0},g,\{f_i\}_{i \in N})$ is a universal type space, if for any type space $(S,\{(\Omega',\mathcal{M}'_i)\}_{i \in N_0},g',\{f'_i\}_{i \in N})$ there exists a unique type morphism $\varphi : \Omega' \to \Omega$.
\end{definition}

In other words, the universal type space is the most general, the broadest type space among the type spaces. It contains all types which appear in the type spaces of the given category.

In the language of category theory Definition \ref{tt3} means the following:

\begin{corollary}
The universal type space is a terminal (final) object in category $\mathcal{C}^S$.
\end{corollary}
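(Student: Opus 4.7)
The plan is to observe that this corollary is essentially an unpacking of definitions: the notion of terminal object in a category is defined by exactly the existence-and-uniqueness property that Definition \ref{tt3} imposes on the universal type space, and morphisms in $\mathcal{C}^S$ are by construction type morphisms. So the proof is just a matter of matching these two formulations.

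More concretely, I would proceed in two short steps. First, recall the categorical definition: an object $T$ in a category $\mathcal{C}$ is terminal (or final) if for every object $X$ of $\mathcal{C}$ there exists a unique morphism $X \to T$ in $\mathcal{C}$. Second, let $U = (S,\{(\Omega,\mathcal{M}_i)\}_{i \in N_0},g,\{f_i\}_{i \in N})$ be a universal type space in the sense of Definition \ref{tt3}, and let $X = (S,\{(\Omega',\mathcal{M}'_i)\}_{i \in N_0},g',\{f'_i\}_{i \in N})$ be an arbitrary object of $\mathcal{C}^S$. By Definition \ref{tt3}, applied with $X$ in the role of the given type space and $U$ in the role of the universal one, there exists a unique type morphism $\varphi : X \to U$. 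Since by the preceding corollary the morphisms of $\mathcal{C}^S$ are precisely the type morphisms, this says exactly that $U$ admits a unique $\mathcal{C}^S$-morphism from any object, i.e., $U$ is terminal.

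There is essentially no obstacle here; the only thing to keep straight is the direction of the morphism. One might momentarily worry about whether uniqueness in Definition \ref{tt3} is really uniqueness in the categorical sense or uniqueness among some smaller class, but since definition \ref{tt2} identifies the morphisms of $\mathcal{C}^S$ as precisely the type morphisms, the two notions coincide. Hence the statement follows immediately, and no further argument beyond this identification of definitions is needed.
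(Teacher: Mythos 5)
Your proof is correct and matches the paper's approach exactly: the paper also proves this corollary by observing that it ``comes directly from definition \ref{tt3}'', which is precisely your unpacking of the terminal-object property against the existence-and-uniqueness clause in that definition. No further comment is needed.
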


From the viewpoint of category theory the uniqueness of a universal type space is a straightforward statement.

\begin{corollary}\label{egyetlen}
The universal type space is unique up to type isomorphism. 
\end{corollary}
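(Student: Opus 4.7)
The plan is to invoke the standard categorical fact that any terminal object is unique up to unique isomorphism, applied to the category $\mathcal{C}^S$ via the preceding corollary which identifies universal type spaces with terminal objects of $\mathcal{C}^S$.

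Concretely, I would first observe that for any type space $T = (S,\{(\Omega,\mathcal{M}_i)\}_{i \in N_0},g,\{f_i\}_{i \in N})$, the identity map $\mathrm{id}_\Omega : \Omega \to \Omega$ is a type morphism from $T$ to itself: it is trivially $\mathcal{M}$-measurable and both commutativity diagrams \eqref{diag1} and \eqref{morphism} reduce to tautologies. This tiny remark is what lets uniqueness from definition \ref{tt3}. bite in the next step.

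Next, suppose $T$ and $T'$ are two universal type spaces in $\mathcal{C}^S$. By universality of $T$ applied to $T'$, there is a unique type morphism $\varphi : T' \to T$; by universality of $T'$ applied to $T$, there is a unique type morphism $\psi : T \to T'$. The compositions $\psi \circ \varphi : T' \to T'$ and $\varphi \circ \psi : T \to T$ are then type morphisms (composition of type morphisms is again a type morphism, as already implicit in the category $\mathcal{C}^S$). But $\mathrm{id}_{T'}$ and $\mathrm{id}_T$ are also type morphisms with the same source and target, so the uniqueness clause in definition \ref{tt3}. forces $\psi \circ \varphi = \mathrm{id}_{T'}$ and $\varphi \circ \psi = \mathrm{id}_T$. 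Hence $\varphi$ is a bijection with $\varphi^{-1} = \psi$ a type morphism, i.e., $\varphi$ is a type isomorphism in the sense of definition \ref{tt2}.

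There is no real obstacle here: the argument is purely formal and identical to the proof that terminal objects in any category are unique up to unique isomorphism. The only point requiring minimal verification is that identities and compositions of type morphisms are type morphisms, which is immediate from definition \ref{tt2}. and is already packaged in the corollary asserting that $\mathcal{C}^S$ is a category.
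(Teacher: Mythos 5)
Your proposal is correct and takes essentially the same route as the paper: the paper's proof is the one-line appeal ``every terminal object is unique up to isomorphism,'' resting on the preceding corollary that universal type spaces are terminal objects of $\mathcal{C}^S$, and you have simply unfolded the standard proof of that categorical fact (identities, the two induced morphisms, and the uniqueness clause forcing the compositions to be identities). No discrepancy to report.
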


\begin{proof}
Every terminal object is unique up to isomorphism.
\end{proof}

The only question is the existence of a universal type space. Next, we present a result which is an adaptation of \cite{HeifetzSamet1998} Theorem 3.4 to our setting.

\begin{proposition}\label{egyetemes}
There exists a universal type space, in other words, there is a terminal object in category $\mathcal{C}^S$.
\end{proposition}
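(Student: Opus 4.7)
The plan is to follow the Heifetz--Samet strategy, adapted to the present formalism: construct the universal type space as the set of coherent hierarchies of beliefs, equip it with canonical type functions, and verify it is terminal in $\mathcal{C}^S$.

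First I would build the tower of finite-order hierarchy spaces inductively. Set $H_0 = S$, and for $n \geq 1$ put $H_n = H_{n-1} \times \prod_{i \in N} \Delta(H_{n-1})$, where each $\Delta(H_{n-1})$ carries the $\sigma$-field $\mathcal{A}^\ast$ of Definition \ref{merhetoseg}; the coordinate projection $\pi_n : H_n \to H_{n-1}$ is measurable. Given an arbitrary type space $(S, \{(\Omega,\mathcal{M}_i)\}_{i \in N_0}, g, \{f_i\}_{i \in N})$, define measurable maps $\beta_n : \Omega \to H_n$ recursively by $\beta_0 = g$ and $\beta_n(\omega) = (\beta_{n-1}(\omega), (\mu_i^n(\omega))_{i \in N})$, where $\mu_i^n(\omega)$ is the pushforward of $f_i(\omega)$ along $\beta_{n-1}$. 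Measurability at each stage follows from the standard fact that pushforward along a measurable map is $\mathcal{A}^\ast$-to-$\mathcal{A}^\ast$-measurable, together with the inductive measurability of $\beta_{n-1}$.

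Next I would let $T_i^\ast \subseteq \prod_{n \geq 0} \Delta(H_n)$ consist of the coherent sequences $(\nu^n)_n$, i.e., those satisfying $\nu^{n+1}\circ \pi_{n+1}^{-1} = \nu^n$, and set $\Omega^\ast = S \times \prod_{i \in N} T_i^\ast$ with product $\sigma$-fields $\mathcal{M}_i^\ast$ generated by the natural coordinate projections. The map $\varphi : \Omega \to \Omega^\ast$ defined by $\varphi(\omega) = (g(\omega), ((\mu_i^n(\omega))_n)_{i \in N})$ is measurable and commutes with $g$ as in diagram \eqref{diag1}; moreover any type morphism into a space built from hierarchies must agree with the hierarchy it assigns to each state, so $\varphi$ is forced, hence unique.

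The main obstacle is to define the type functions $f_i^\ast : \Omega^\ast \to \Delta(\Omega^\ast, \mathcal{M}_{-i}^\ast)$ that turn $\Omega^\ast$ into a type space for which $\varphi$ intertwines $f_i$ and $f_i^\ast$. A point of $\Omega^\ast$ furnishes a coherent system of marginals on the $H_n$, and these must be assembled into a single probability measure on $(\Omega^\ast, \mathcal{M}_{-i}^\ast)$. In the topological setting this step is Kolmogorov's Extension Theorem; without topology one exploits the fact that $\mathcal{M}_{-i}^\ast$ is generated by the cylinder sets pulled back from the $H_n$, on which coherence already pins down a finitely additive set function whose $\sigma$-additivity can be verified level by level inside each $\Delta(H_n)$. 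With $f_i^\ast$ in hand, checking the commutativity of diagram \eqref{morphism} for $\varphi$ and that $f_i^\ast$ itself is $\mathcal{M}_i^\ast$-measurable (by reducing the generating events $\{\mu(A) \geq p\}$ to measurable events in the product) completes the verification that $(S, \{(\Omega^\ast, \mathcal{M}_i^\ast)\}_{i \in N_0}, g^\ast, \{f_i^\ast\}_{i \in N})$ is terminal in $\mathcal{C}^S$.
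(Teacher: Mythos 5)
Your construction has a genuine gap at exactly the point you flag as ``the main obstacle,'' and the way you propose to close it does not work. You define $T_i^\ast$ as the set of \emph{all} coherent sequences $(\nu^n)_n$ and then assert that, for each such point, the finitely additive set function induced on the cylinder sets of $\Omega^\ast$ has $\sigma$-additivity that ``can be verified level by level inside each $\Delta(H_n)$.'' This is precisely what fails in general: coherence gives a consistent, finitely additive set function on the algebra of cylinders, but $\sigma$-additivity on that algebra is \emph{not} automatic, and Heifetz and Samet's paper \cite{HS4} (discussed in section \ref{comparison}, example \ref{plHS}, built on Halmos's Vitali-set example \cite{Halmos1}) exhibits a coherent hierarchy whose cylinder measures admit no $\sigma$-additive extension -- a decreasing sequence of cylinders with empty intersection whose measures stay bounded away from zero. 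In the purely measurable framework with the $\mathcal{A}^\ast$ $\sigma$-fields of definition \ref{merhetoseg}. the extension does in fact exist for every coherent sequence, but that is a substantive theorem -- it is exactly the content of \eqref{fontos}, which this paper obtains by invoking Pintér's non-topological variant of the Kolmogorov Extension Theorem \cite{pmp3} -- and it is the crux of the whole paper, not a routine verification. Without it, your $f_i^\ast$ may simply be undefined at some points of $\Omega^\ast$, so $\Omega^\ast$ is not known to be a type space at all.

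It is also worth noting that what you describe is not actually the Heifetz--Samet strategy, despite the opening sentence. Heifetz and Samet \cite{HS3} (Theorem 3.4, which is the paper's entire proof of this proposition) sidestep the extension problem: their universal space consists only of those hierarchies that arise as \emph{descriptions of states in some type space}, not of all coherent hierarchies. For such points the belief measure is obtained by pushing forward $f_i(\omega)$ along the description map, so $\sigma$-additivity is inherited and no extension theorem is ever needed; the work goes instead into showing the type function is well defined independently of the representing type space. Your construction is instead the Mertens--Zamir-style beliefs-space route, which is the route this paper takes for the stronger theorem \ref{fotetel}. -- but that route stands or falls with the inverse-limit theorem \cite{pmp3}, which your proposal assumes away.
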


As we have already mentioned, \cite{HeifetzSamet1998}'s formalization of type spaces is different from ours. Since the difference between the two approaches is quite slight and we prove a stronger result in Theorem \ref{fotetel}, we omit the proof of the above proposition.

Next, we turn our attention to an other property of type spaces, to the completeness.

\begin{definition}\label{teljesseg}
The type space $(S,\{(\Omega,\mathcal{M}_i)\}_{i \in N_0},g,\{f_i\}_{i \in N})$ is complete, if for each $i \in N$: $f_i$ is surjective (onto).
\end{definition}

\cite{Brandenburger2003} introduces the concept of complete type space. The completeness recommends that for any player $i$, any probability measure on $(\Omega,$ $\mathcal{M}_{-i})$ be in the range of the given player's type function. In other words, for any player $i$, any measure on $(\Omega,\mathcal{M}_{-i})$ must be assigned (by the type function $f_i$) to a type of player $i$. The following proposition is practically the same as \cite{Meier2001}'s Theorem 4.

\begin{proposition}\label{fontosallitas}
The universal type space is complete.
\end{proposition}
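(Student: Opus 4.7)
The strategy is a one-point extension argument that leverages the universal property delivered by Proposition \ref{egyetemes}. I fix $i \in N$ and a probability measure $\mu \in \Delta(\Omega, \mathcal{M}_{-i})$ on the universal type space $(\Omega, \{\mathcal{M}_j\}, g, \{f_j\})$. To show that $\mu$ lies in the range of $f_i$, I plan to build an auxiliary type space obtained from $\Omega$ by adjoining a single new state whose $f_i$-value is $\mu$, and then push this state back into $\Omega$ via the unique morphism delivered by universality.

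Concretely, I would pick any anchor $\omega_0 \in \Omega$ and set $\Omega' \circeq \Omega \cup \{\omega^*\}$ with $\omega^* \notin \Omega$. For each $j \ne i$ (including $j = 0$) I declare $\omega^*$ and $\omega_0$ measurably indistinguishable by setting
\begin{equation*}
\mathcal{M}'_j \circeq \{A \subseteq \Omega' : A \cap \Omega \in \mathcal{M}_j \text{ and } (\omega^* \in A \Leftrightarrow \omega_0 \in A)\};
\end{equation*}
for $j = i$ I let $\mathcal{M}'_i \circeq \sigma(\mathcal{M}_i \cup \{\{\omega^*\}\})$, so that $\omega^*$ is separated. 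I define $g'|_\Omega \circeq g$ and $g'(\omega^*) \circeq g(\omega_0)$. For each $j \in N$ and $\omega \in \Omega$ I take $f'_j(\omega)$ to be the canonical extension of $f_j(\omega)$ from $\mathcal{M}_{-j}$ to $\mathcal{M}'_{-j}$ via $\tilde{\nu}(A) \circeq \nu(A \cap \Omega)$; for $j \ne i$ I put $f'_j(\omega^*) \circeq f'_j(\omega_0)$, and I put $f'_i(\omega^*) \circeq \tilde{\mu}$, the analogous extension of $\mu$. One verifies that $\mathcal{M}'_{-i}$ does not separate $\omega^*$ from $\omega_0$, so $\tilde{\mu}$ is well defined; measurability of each $f'_j$ on the generating events $\{\nu : \nu(B) \ge p\}$ of $\mathcal{A}^*$ reduces to measurability of $f_j$ together with the defining property of $\mathcal{M}'_j$. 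This yields $(\Omega', \{\mathcal{M}'_j\}, g', \{f'_j\}) \in \mathcal{C}^S$.

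Next, the inclusion $\iota : \Omega \hookrightarrow \Omega'$ is itself a type morphism, since $\iota^{-1}(A) = A \cap \Omega$ and the extension was chosen so that $f'_j(\omega)(A) = f_j(\omega)(A \cap \Omega)$ for $\omega \in \Omega$. By universality there is a unique morphism $\varphi : \Omega' \to \Omega$; hence $\varphi \circ \iota : \Omega \to \Omega$ is a morphism into the universal type space and the uniqueness clause of Definition \ref{tt3} forces $\varphi \circ \iota = \mathrm{id}_\Omega$. Setting $\omega \circeq \varphi(\omega^*)$ and applying diagram \eqref{morphism} at $\omega^*$, I would obtain for every $A \in \mathcal{M}_{-i}$
\begin{equation*}
f_i(\omega)(A) = f'_i(\omega^*)(\varphi^{-1}(A)) = \tilde{\mu}(\varphi^{-1}(A)) = \mu(\varphi^{-1}(A) \cap \Omega) = \mu(A),
\end{equation*}
the last step because $\varphi^{-1}(A) \cap \Omega = (\varphi \circ \iota)^{-1}(A) = A$. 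Thus $f_i(\omega) = \mu$ and $f_i$ is surjective, proving completeness.

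The main obstacle is not the categorical finale but the $\sigma$-field bookkeeping in the construction of $\Omega'$: I must verify that the canonical extensions of the original beliefs and of $\mu$ land in the correct measure spaces, that each $f'_j$ is $\mathcal{M}'_j$-measurable, and — most delicately — that the separation of $\omega^*$ in $\mathcal{M}'_i$ does not propagate into $\mathcal{M}'_{-i}$, so that $\tilde{\mu}$ really is a probability measure on $(\Omega', \mathcal{M}'_{-i})$. Once this is in place, the identification $\varphi \circ \iota = \mathrm{id}_\Omega$ and the conclusion follow formally from universality.
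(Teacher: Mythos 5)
Your proof is correct, but it follows a genuinely different route from the paper's. The paper offers no argument of its own for Proposition \ref{fontosallitas}: it simply cites Meier \cite{Mei1}, Theorem 4, whose proof runs through an infinitary probability logic and a canonical-space construction, and which (as the paper itself concedes) is stated for a slightly different formalization of type spaces. Your argument, by contrast, is self-contained within the category $\mathcal{C}^S$ and shows that completeness is a \emph{formal consequence} of universality (Proposition \ref{egyetemes}): the one-point extension $\Omega' = \Omega \cup \{\omega^*\}$ is again an object of $\mathcal{C}^S$ (your $\sigma$-field bookkeeping does go through --- the trace of $\mathcal{M}'_{-j}$ on $\Omega$ equals $\mathcal{M}_{-j}$, so the extensions $\tilde{\nu}$ and $\tilde{\mu}$ are well defined; gluing $\omega^*$ to $\omega_0$ in every $\mathcal{M}'_j$ with $j \neq i$ keeps $\omega^*$ unseparated in $\mathcal{M}'_{-i}$; and preimages of the generators $\{\nu : \nu(B) \geq p\}$ lie in $\mathcal{M}'_j$), the inclusion $\iota$ is a type morphism, and terminality forces $\varphi \circ \iota = \mathrm{id}_\Omega$, giving $f_i(\varphi(\omega^*)) = \mu$. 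What your approach buys is a short, purely categorical proof tailored exactly to the paper's own definitions, making Proposition \ref{fontosallitas} independent of Meier's machinery and of the slight mismatch between his setting and $\mathcal{C}^S$; what the paper's citation buys is a result proved by entirely independent (syntactic) means, which also survives if one doubts the informal transfer of Proposition \ref{egyetemes} from Heifetz--Samet's formalization to this one --- a transfer your argument takes as given. Two conventions you silently rely on deserve a remark: (i) Definition \ref{tipuster} literally sets $\mathcal{M}_{-i} = \sigma(\bigcup_{j \in N \setminus \{i\}} \mathcal{M}_j)$, omitting $\mathcal{M}_0$, although other proofs in the paper (e.g.\ that of Proposition \ref{atetel2}) treat $g^{-1}(A)$ as $\mathcal{M}_{-i}$-measurable; your gluing of $\omega^*$ to $\omega_0$ for all $j \neq i$ \emph{including} $j = 0$ makes your construction valid under either reading. (ii) For $f'_i(\omega^*)(\varphi^{-1}(A))$ to be defined you need $\varphi^{-1}(A) \in \mathcal{M}'_{-i}$ whenever $A \in \mathcal{M}_{-i}$; this does not follow literally from the $\mathcal{M}$-measurability demanded in Definition \ref{tt2}, but it is implicit there, since otherwise diagram \eqref{morphism} would itself be ill-posed.
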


Although \cite{Meier2001}'s type space is different from ours, the difference is slight, moreover, we prove a stronger result in Theorem \ref{fotetel}, hence we also omit the proof of the above proposition.

\section{The beliefs space}\label{bs}

In this section we formalize the intuition of hierarchies of beliefs, as \cite{Harsanyi196768} calls them the "infinite regress in reciprocal expectations." First, we give a rough description (see \cite{MertensZamir1985}):

\begin{equation*}\label{alap}
\begin{array}{lcl}
T_0 & \circeq & S \\
T_1 & \circeq & T_0 \otimes \Delta(T_0)^N \\
T_2 & \circeq & T_1 \otimes \Delta(T_1)^N = T_0 \otimes \Delta(T_0)^N \otimes \Delta(T_0 \otimes \Delta(T_0)^N)^N \\
& \vdots & \\
T_n & \circeq & T_{n-1} \otimes \Delta(T_{n-1})^N = T_0 \otimes \bigotimes \limits_{m=0}^{n-1} \Delta(T_m)^N \\
& = & T_0 \otimes \bigotimes \limits_{m=0}^{n-2} \Delta(T_m)^N \otimes \Delta(T_0 \otimes \bigotimes \limits_{m=0}^{n-2} \Delta(T_m)^N)^N \\
& \vdots &
\end{array}
\end{equation*}

The above formalism can be interpreted as follows. $T_0$ describes the basic uncertainty of the modeled situation, its elements are the states of nature. $T_1$ is for $T_0$ and the first order beliefs of the players $\Delta (T_0)^N$, i.e., the players' beliefs about the states of nature. In general, $T_n$ describes $T_{n-1}$ and the $n$th order beliefs of the payers $\Delta (T_{n-1})^N$, i.e., the players' beliefs about $T_{n-1}$.

There are, however, some redundancies in this model. First, \cite{Harsanyi196768} proposes that the players know their own types (see point $3$. in Definition \ref{tipuster}), so e.g. player $i$'s belief about her own first order belief is a Dirac measure. Second, there is an other redundancy\footnote{This redundancy is called \textit{coherency} and \textit{consistency} in the literature of game theory and mathematics respectively.} in the above description as well. E.g. $\Delta(T_0 \otimes \Delta(T_0)^N)^N$ determines $\Delta (T_0)^N$ and so does $\Delta(T_{n-1})^N$ for all $0 \leq m \leq n-2$: $\Delta(T_m)^N$.

Therefore we can rewrite the above formalism, from the viewpoint of player $i$ as follows (the definition below is a purely measurable reformulation of \cite{MertensSorinZamir1994}'s concept):

\begin{definition}\label{velemenyter}
In Diagram \eqref{limit2}

\begin{equation}\label{limit2}
\begin{diagram}
\Theta^i & \mspace{100mu} & \Delta (S \otimes \Theta^{N \setminus \{i\}}) \\
\dTo^{p^i_{n+1}} & & \dTo^{id_S} \mspace{35mu} \dTo_{p^{N \setminus \{i\}}_n} \\
\Theta_{n+1}^i & \circeq & \Delta (S \otimes \Theta_{n}^{N \setminus \{i\}}) \\
\dTo^{q_{nn+1}^i} & & \dTo^{id_S} \mspace{35mu} \dTo_{q_{n-1n}^{N \setminus \{i\}}} \\
\Theta_{n}^i & \circeq & \Delta (S \otimes \Theta_{n-1}^{N \setminus \{i\}}) \\
\end{diagram}
\end{equation}

\begin{itemize}
\item $i \in N$ is a player,

\item $n \in \mathbb{N}$,

\item $S$ is the parameter space (see Assumption \ref{ass1}).
\end{itemize}

Moreover for each $i \in N$: 

\begin{itemize}
\item $\# \Theta_{-1}^i = 1$,

\item for each $n \in \mathbb{N} \cup \{-1\}$: $\Theta_n^{N \setminus \{i\}} \circeq \bigotimes \limits_{j \in {N \setminus \{i\}}} \Theta^{j}_n$,

\item $q_{-10}^i : \Theta_0^i \rightarrow \Theta_{-1}^i$,

\item for all $m,n \in \mathbb {N}$ such that $m \leq n$, $\mu \in \Theta_n^i$:

\begin{equation*}
q_{mn}^i (\mu) \circeq \mu|_{S \otimes \Theta_{m-1}^{N \setminus \{i\}}} \ ,
\end{equation*}

\noindent therefore $q_{mn}^i$ is a measurable mapping.

\item $\Theta^i \circeq  \varprojlim (\Theta_n^i,\mathbb{N} \cup \{-1\},q_{mn}^i)$,

\item for each $n \in \mathbb{N} \cup \{-1\}$: $p_n^i : \Theta^i \rightarrow \Theta_{n}^i$ is the canonical projection,

\item for all $m,n \in \mathbb{N} \cup \{-1\}$ such that $m \leq n$: $q_{mn}^{N \setminus \{i\}}$ is the product of the mappings $q_{mn}^j$, $j \in {N \setminus \{i\}}$, and so is $p_n^{N \setminus \{i\}}$ of $p_n^j$, $j \in {N \setminus \{i\}}$, therefore both mappings are measurable,

\item $\Theta^{N \setminus \{i\}} \circeq  \bigotimes \limits_{j \in {N \setminus \{i\}}} \Theta^{j}$.
\end{itemize}

\noindent Then $T \circeq S \otimes \Theta^N$ is called a purely measurable beliefs space.
\end{definition}

The interpretation of the purely measurable beliefs space is the following. For any $\theta^i \in \Theta^i$: $\theta^i = (\mu_{1}^i,\mu_{2}^i,\ldots)$, where $\mu_{n}^i \in \Theta_{n-1}^i$ is player $i$'s $n$th order belief. Therefore each point of $\Theta^i$ defines an inverse system of probability measure spaces 

\begin{equation}\label{hb}
((S \otimes \Theta_n^{N \setminus \{i\}},p_{n+1}^i (\theta^i)),\mathbb{N} \cup \{-1\},(id_S,q_{mn}^{N \setminus \{i\}})) \ ,
\end{equation}

\noindent where $(id_S,q_{mn}^{N \setminus \{i\}})$\footnote{It is clear that this is a measurable mapping.} is the product of mappings $id_S$ and $q_{mn}^{N \setminus \{i\}}$. We call the inverse systems of probability measure spaces like \eqref{hb} player $i$'s \textit{hierarchies of beliefs}\footnote{In the literature system like \eqref{hb} is usually called coherent hierarchy of beliefs. Since it does not make confusion in this paper we omit the adjective coherent.}.

To sum up, $T$ consists of all states of the world: all states of nature: the points of $S$, and all players' all states of the mind: the points of set $\Theta^N$, therefore $T$ contains all players' all hierarchies of beliefs.

Our main result:

\begin{theorem}\label{fotetel}
The complete universal type space contains all players' all hierarchies of beliefs.
\end{theorem}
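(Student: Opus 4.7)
The plan is to equip the beliefs space $T = S \otimes \Theta^N$ itself with the structure of a type space, show that this type space is both complete and universal, and then invoke corollary \ref{egyetlen} to identify it (up to type isomorphism) with the complete universal type space. Since the points of $T$ are by construction precisely the tuples $(s,(\theta^i)_{i \in N})$ consisting of a state of nature together with one hierarchy of beliefs per player, the theorem follows at once from this identification.

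For each $i \in N_0$, equip $T$ with the $\sigma$-field $\mathcal{M}_i$ pulled back from the relevant coordinate ($\mathcal{A}$ from $S$ for $i=0$; the canonical inverse-limit $\sigma$-field on $\Theta^i$ for $i \in N$), and let $g: T \to S$ be the coordinate projection. The crux is the construction of the type functions $f_i$. Fix $i \in N$ and a point $\tau = (s,(\theta^j)_{j \in N}) \in T$; by the very definition $\Theta^i = \varprojlim(\Theta_n^i, q_{mn}^i)$, the component $\theta^i = (\mu_n^i)_{n \geq 1}$ is a coherent tower of probability measures, i.e.\ precisely the inverse system of measure spaces \eqref{hb}. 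Apply Pint\'er's purely measurable inverse-limit theorem \cite{pmp3} to this tower to obtain a unique probability measure on $S \otimes \Theta^{N \setminus \{i\}}$ whose pushforwards to the finite stages recover every $\mu_n^i$, and declare this measure to be $f_i(\tau)$. Since $f_i$ depends only on $\theta^i$ and since the generators of $\mathcal{A}^\ast$ (definition \ref{merhetoseg}) pull back to cylinder events from the finite stages, standard bookkeeping shows $f_i$ is $\mathcal{M}_i$-measurable.

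It remains to verify completeness (definition \ref{teljesseg}) and universality (definition \ref{tt3}). For completeness, given any $\nu \in \Delta(T,\mathcal{M}_{-i})$, its successive pushforwards to the finite stages $S \otimes \Theta_m^{N \setminus \{i\}}$ form a coherent tower determining an element $\theta^i \in \Theta^i$ which the uniqueness clause of the inverse-limit theorem sends back to $\nu$; so $f_i$ is onto. For universality, given any type space $(S,\{(\Omega',\mathcal{M}_i')\},g',\{f_i'\})$, build the \emph{description map} $\varphi : \Omega' \to T$ sending $\omega$ to $(g'(\omega),(\theta^i(\omega))_i)$, where $\theta^i(\omega)$ is constructed inductively by taking the $(n+1)$-st order belief $\mu_{n+1}^i(\omega)$ to be the push-forward of $f_i'(\omega)$ along the map $\Omega' \to S \otimes \Theta_n^{N \setminus \{i\}}$ assembled from $g'$ and the previously constructed $\theta_n^j$, $j \neq i$. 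Commutativity of diagrams \eqref{diag1} and \eqref{morphism} is built into this construction, and uniqueness of $\varphi$ follows because any type morphism into $T$ is forced to send $\omega$ to precisely this hierarchy.

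The principal obstacle is the inverse-limit step. In a topological setting this would be Kolmogorov's extension theorem, but all classical versions rest on tightness or inner compact regularity, structure that is unavailable in the purely measurable framework and whose enforcement is exactly what Pint\'er \cite{pmp2} shows destroys universality. The replacement is Pint\'er's non-topological result \cite{pmp3}, whose hypotheses must be checked for the specific systems \eqref{hb} and whose output must be propagated measurably in $\theta^i$ so that $f_i$ is genuinely measurable. Once this is in hand, propositions \ref{egyetemes} and \ref{fontosallitas} together with corollary \ref{egyetlen} identify $(T,g,\{f_i\})$ with the complete universal type space, and the theorem is immediate.
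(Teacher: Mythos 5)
Your proposal follows essentially the same route as the paper: you turn the beliefs space $T$ into a type space by using Pintér's inverse-limit result \cite{pmp3} (which is exactly the paper's key identification $\Theta^i = \Delta(S \otimes \Theta^{N \setminus \{i\}})$), prove completeness from surjectivity of that correspondence, prove universality via the inductively built description map, and conclude by the uniqueness of the universal type space (corollary \ref{egyetlen}). The argument is correct and matches the paper's proof in both structure and substance.
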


We present the proof of Theorem \ref{fotetel} in the next section.

\section{The proof of Theorem \ref{fotetel}}\label{proof}

The strategy of the proof is to show that the purely measurable beliefs space (see Definition \ref{velemenyter}) generates (is equivalent to) the complete universal type space (in category $\mathcal{C}^S$). Mathematically, the key point of the proof is to demonstrate the following lemma:

\begin{lemma}\label{fontos2}
For each $i \in N$, $\theta^i \in \Theta^i$: the inverse system of probability measure spaces

\begin{equation}
((S \otimes \Theta_n^{N \setminus \{i\}},p_{n+1}^i (\theta^i)),\mathbb{N} \cup \{-1\},(id_S,q_{mn}^{N \setminus \{i\}}))
\end{equation}

\noindent admits a unique inverse limit.
\end{lemma}

Since the proof of Lemma \ref{fontos2} is technical, we relegated it to Appendix \ref{appendix}.

It is a slight calculation to see that Lemma \ref{fontos2} implies directly that for each $i \in N$ in Diagram \eqref{limit2} 

\begin{equation*}\label{kell0}
\Theta^i = \Delta (S \otimes \Theta^{N \setminus \{i\}}) \ ,
\end{equation*}

\noindent i.e., those are measurable isomorphic. 

First we show that the beliefs space of Definition \ref{velemenyter} induces a type space.

\begin{lemma}\label{indukalttipuster}
The purely measurable beliefs space $T$ induces a type space in category $\mathcal{C}^S$.
\end{lemma}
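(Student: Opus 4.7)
The plan is to take $\Omega \circeq T = S \otimes \Theta^N$ as the space of states of the world, and to produce the required $\sigma$-fields and maps from the projections already built into the construction of the beliefs space. Concretely, for each $i \in N$ let $\pi_i : T \to \Theta^i$ be the coordinate projection and $g : T \to S$ the projection onto the parameter space; then set $\mathcal{M}_0 \circeq \sigma(g^{-1}(\mathcal{A}))$ and $\mathcal{M}_i \circeq \sigma(\pi_i^{-1}(\mathcal{B}^i))$, where $\mathcal{B}^i$ is the inverse-limit $\sigma$-field on $\Theta^i$ inherited from the $\mathcal{A}^\ast$-fields on the $\Theta_n^i$ via the canonical projections $p_n^i$. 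Then $g$ is $\mathcal{M}_0$-measurable by definition, and it remains only to exhibit the type functions $f_i$ and check the measurability in definition \ref{tipuster}.

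The heart of the argument, as the excerpt already flags, is the identification
\begin{equation*}
\Theta^i \; = \; \Delta\!\left(S \otimes \Theta^{N \setminus \{i\}}\right)
\end{equation*}
as measurable spaces. First I would observe that by the universal property of the inverse limit, a point $\theta^i = (\mu_1^i, \mu_2^i, \ldots) \in \Theta^i$ is precisely a coherent sequence with $q_{mn}^i(\mu_n^i) = \mu_m^i$, and hence defines an inverse system of probability spaces $((S \otimes \Theta_{n}^{N\setminus\{i\}}, \mu_{n+1}^i), \mathbb{N}\cup\{-1\}, (id_S, q_{mn}^{N\setminus\{i\}}))$, i.e.\ exactly the hierarchy of beliefs \eqref{hb}. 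Next I would apply the purely measurable Kolmogorov-type extension theorem of Pintér \cite{pmp3} to this inverse system. Because the bonding maps are the canonical restriction projections and the underlying index set is countable, that theorem supplies a unique inverse-limit probability measure $\hat{\theta}^i$ on $S \otimes \Theta^{N\setminus\{i\}}$ whose marginal on each finite stage $S \otimes \Theta_{n-1}^{N\setminus\{i\}}$ agrees with $\mu_n^i$. The converse direction is routine: any $\nu \in \Delta(S \otimes \Theta^{N\setminus\{i\}})$ produces a coherent sequence by taking marginals through the $p_n^{N\setminus\{i\}}$, so the two maps are mutually inverse. The measurability of both directions follows from the definition of $\mathcal{A}^\ast$ applied stage by stage together with the fact that the marginal map is always measurable.

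With this isomorphism in hand I would define the type function by $f_i(\omega) \circeq \widehat{\pi_i(\omega)}$, i.e.\ for $\omega = (s,\theta^1,\ldots) \in T$ put $f_i(\omega)$ equal to the inverse-limit measure on $(T, \mathcal{M}_{-i})$ associated to $\theta^i$, using that $(T, \mathcal{M}_{-i})$ is measurably isomorphic to $S \otimes \Theta^{N\setminus\{i\}}$ via the projections whose pullbacks generate $\mathcal{M}_{-i}$. Since $f_i$ factors through the coordinate projection $\pi_i$ followed by the measurable isomorphism $\Theta^i \to \Delta(S \otimes \Theta^{N\setminus\{i\}})$, it is $\mathcal{M}_i$-measurable. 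Together with $g$ and the $\sigma$-fields above, this yields a type space in $\mathcal{C}^S$.

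The main obstacle is the inverse-limit step: in the purely measurable framework one cannot invoke the classical Kolmogorov extension theorem because there are no compactness or regularity hypotheses available on $\Delta(\Theta_n^{N\setminus\{i\}})$, and indeed this is precisely where the topological approaches of Mertens--Zamir and Brandenburger--Dekel differ from ours. I expect the whole force of the lemma, and of the subsequent theorem \ref{fotetel}, to rest on checking that the inverse system \eqref{hb} satisfies the hypotheses of Pintér \cite{pmp3}; once that is verified the remaining verifications (measurability of $g$ and $f_i$, coherence of marginals, compatibility with the $\sigma$-field $\mathcal{M}_{-i}$) are bookkeeping.
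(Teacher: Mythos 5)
Your proposal follows essentially the same route as the paper's own proof: both equip $T$ with the coordinate-projection $\sigma$-fields, invoke Pintér's inverse-limit result \cite{pmp3} to obtain the measurable isomorphism $\Theta^i = \Delta(S \otimes \Theta^{N \setminus \{i\}})$, and then define the type function $f_i^\ast$ as the $i$th coordinate projection composed with that isomorphism. The paper's proof is in fact terser than yours---it cites \cite{pmp3} without unpacking the coherent-sequence/marginal correspondence or the measurability checks you spell out---so your extra detail is consistent with, and a faithful elaboration of, the paper's argument.
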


\begin{proof}
For each $i \in N$: let $pr_i : T \rightarrow \Theta^{i}$, $pr_0 : T \rightarrow S$ be coordinate projections, and for each $i \in N \cup \{0\}$: let the $\sigma$-fields $\mathcal{M}_i^\ast$ be induced by $pr_i$. From Lemma \ref{fontos2} for each $i \in N$:

\begin{equation}\label{fontos}
\Theta^i = \Delta (S \otimes \Theta^{N \setminus \{i\}}) \ ,
\end{equation}

\noindent i.e., those are measurable isomorphic.

Furthermore, let $g^\ast \circeq pr_0$ and for each $t \in T$: $f_i^\ast (t) \circeq pr_i (t)$. Then

\begin{equation*}
(S,\{(T,\mathcal{M}_i^\ast)\}_{i \in N},g^\ast,\{f_i^\ast\}_{i \in N}) 
\end{equation*}

\noindent is a type space in category $\mathcal{C}^S$.
\end{proof}

The following proposition is a direct corollary of identity \eqref{fontos}.

\begin{proposition}\label{teljeskell2}
The type space $(S,\{(T,\mathcal{M}_i^\ast)\}_{i \in N},g^\ast,\{f_i^\ast\}_{i \in N})$ is complete.
\end{proposition}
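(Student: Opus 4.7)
The plan is to exhibit, for each $i \in N$, a canonical measurable isomorphism between $\Delta(T,\mathcal{M}_{-i}^\ast)$ and $\Theta^i$ under which $f_i^\ast = pr_i$ becomes surjective. Combined with Lemma \ref{indukalttipuster}, which already provides $\Theta^i \cong \Delta(S \otimes \Theta^{N \setminus \{i\}})$, it suffices to identify $\Delta(T,\mathcal{M}_{-i}^\ast)$ with $\Delta(S \otimes \Theta^{N \setminus \{i\}})$ and then invoke surjectivity of the coordinate projection.

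First, let $\pi_{-i}: T \to S \otimes \Theta^{N \setminus \{i\}}$ be the projection that forgets the $i$th coordinate, $\pi_{-i}(s,\theta^N) \circeq (s,\theta^{N \setminus \{i\}})$. Because each $\mathcal{M}_j^\ast$ is by construction the preimage $\sigma$-field of $pr_j$, one has
\begin{equation*}
\mathcal{M}_{-i}^\ast \;=\; \pi_{-i}^{-1}\bigl(\sigma\text{-field on } S \otimes \Theta^{N \setminus \{i\}}\bigr),
\end{equation*}
so that the pushforward $\nu \mapsto \nu \circ \pi_{-i}^{-1}$ sends $\Delta(T,\mathcal{M}_{-i}^\ast)$ into $\Delta(S \otimes \Theta^{N \setminus \{i\}})$. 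Since $\pi_{-i}$ is surjective (as $T = S \otimes \Theta^N$), every set in $\mathcal{M}_{-i}^\ast$ is the preimage of a \emph{uniquely determined} measurable set downstairs, and the pushforward is in fact a bijection. Bimeasurability with respect to $\mathcal{A}^\ast$ follows because a generator $\{\mu : \mu(B) \geq p\}$ on $\Delta(S \otimes \Theta^{N \setminus \{i\}})$ corresponds exactly to the generator $\{\nu : \nu(\pi_{-i}^{-1}(B)) \geq p\}$ on $\Delta(T,\mathcal{M}_{-i}^\ast)$, and the two generating families are identified by the pushforward.

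Composing this pushforward bijection with the isomorphism \eqref{fontos} of Lemma \ref{indukalttipuster} yields a measurable isomorphism $\iota_i: \Theta^i \to \Delta(T,\mathcal{M}_{-i}^\ast)$ such that $f_i^\ast = \iota_i \circ pr_i$. Surjectivity of $f_i^\ast$ is then equivalent to surjectivity of $pr_i: T \to \Theta^i$. But $T = S \otimes \Theta^N$, and $S$ together with each $\Theta^j$ is non-empty (any $\Theta^j$ contains, for instance, the hierarchy built from iterated Dirac measures), so for any $\theta^i \in \Theta^i$ one sets $t \circeq (s,\theta^i,\theta^{N\setminus\{i\}})$ with arbitrary choices of the remaining coordinates and obtains $pr_i(t) = \theta^i$.

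The main obstacle I expect is the verification that $\mathcal{M}_{-i}^\ast$ coincides with the pullback $\sigma$-field of $\pi_{-i}$ and that the induced bijection of probability measures is genuinely bimeasurable with respect to $\mathcal{A}^\ast$. Both reduce to careful bookkeeping through the generating families involved, but they are the only non-trivial steps; once granted, surjectivity of $f_i^\ast$ is a direct consequence of the surjectivity of $pr_i$.
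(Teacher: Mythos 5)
Your proof is correct and takes essentially the same route as the paper: the paper's entire proof is that completeness is a direct corollary of the isomorphism \eqref{fontos}, and your argument simply makes explicit the bookkeeping this hides, namely the identification of $\Delta(T,\mathcal{M}_{-i}^\ast)$ with $\Delta(S \otimes \Theta^{N \setminus \{i\}})$ via the pullback $\sigma$-field of $\pi_{-i}$ and the surjectivity of the coordinate projection $pr_i$ on the nonempty product $T = S \otimes \Theta^N$.
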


Next we show that the type space induced by the purely measurable beliefs space is the universal type space.

\begin{proposition}\label{atetel2}
The type space $(S,\{(T,\mathcal{M}_i^\ast)\}_{i \in N},g^\ast,\{f_i^\ast\}_{i \in N})$ is a universal type space in category $\mathcal{C}^S$.
\end{proposition}

\begin{proof}
Let $(S,\{(\Omega,\mathcal{M}_i)\}_{i \in N},g,\{f_i\}_{i \in N})$ be a type space (an object in $\mathcal{C}^S$), $i \in N$ and $\omega \in \Omega$.

Player $i$'s first order belief at state of the world $\omega$ $v_1^i (\omega)$ is the probability measure defined as follows, for each $A \in S$:

\begin{equation*}
v_1^i (\omega) (A) \circeq f_i (\omega) (g^{-1} (A)) \ .
\end{equation*}

\noindent $f_i$ is $\mathcal{M}_i$-measurable, hence $v_1^i$ is also $\mathcal{M}_i$-measurable.

Player $i$'s second order belief at state of the world $\omega$ $v_2^i (\omega)$ is the probability measure defined as follows, for each $A \in S \otimes \Theta_0^{N \setminus \{i\}}$:

\begin{equation*}
v_2^i (\omega) (A) \circeq f_i (\omega) ((g,v_1^{N \setminus \{i\}})^{-1} (A)) \ ,
\end{equation*} 

\noindent where for each $\omega'$: $(g,v_1^{N \setminus \{i\}}) (\omega') \circeq (g(\omega'),\{v_1^j (\omega')\}_{j \in N \setminus \{i\}})$, hence $(g,v_1^{N \setminus \{i\}})$ is $\mathcal{M}_{-i}$-measurable. Since $f_i$ is $\mathcal{M}_i$-measurable $v_2^i$ is also $\mathcal{M}_i$-measurable.

For any $n > 1$ player $i$'s $n$th order belief at state of the world $\omega$  $v_n^i (\omega)$ is the probability measure defined as follows, for each $A \in S \otimes \Theta_{n-2}^{N \setminus \{i\}}$:

\begin{equation*}
v_n^i (\omega) (A) \circeq f_i (\omega) ((g,v_{n-1}^{N \setminus \{i\}})^{-1} (A)) \ .
\end{equation*} 

\noindent Since $f_i$ is $\mathcal{M}_i$-measurable $v_n^i$ is also $\mathcal{M}_i$-measurable.

To sum up, we have got a mapping $\phi : \Omega \rightarrow T$ defined as follows, for each $\omega \in \Omega$:

\begin{equation}\label{psidef}
\phi (\omega) \circeq (g(\omega),(v^i_1(\omega),v^i_2 (\omega),\ldots)_{i \in N}) \ . 
\end{equation}

Then it is easy to verify that

(1) $\phi$ is $\mathcal{M}$-measurable,

(2) for each $i \in N$, $\omega \in \Omega$: 

\begin{equation*}
g^\ast \circ \phi (\omega) = g (\omega)
\end{equation*}

\noindent and

\begin{equation*}
f^\ast_i \circ \phi (\omega)  = \hat{\phi}_i \circ f_i (\omega)\ ,
\end{equation*}

\noindent i.e., $\phi$ is a type morphism,

(3) Since $\Theta^i$ consists of different inverse systems of probability measure spaces (hierarchies of beliefs), $\phi$ is the unique type morphism from the type space $(S,\{(\Omega,$ $\mathcal{M}_i)\}_{i \in N},g,\{f_i\}_{i \in N})$ to $(S,\{(T,\mathcal{M}_i^\ast)\}_{i \in N},g^\ast,\{f_i^\ast\}_{i \in N})$.
\end{proof}

In the above proof we show that any point in a type space induces a hierarchy of beliefs for each player, i.e., any point in a type space completely describes the players' hierarchies of beliefs at the given state of the world. \cite{BattigalliSiniscalchi1999} provided this observation first in the literature. 

It is also worth noticing that in the above proof $\phi$ is not necessarily injective (one-to-one). The $\phi$-image of redundant types, i.e., types such that generate the same hierarchy of beliefs, see e.g. \cite{ElyPeski2006}, is one point in the universal type space. Therefore, it is not surprising at all that there are no redundant types in the universal type space, hence it can be complete.

\begin{proof}[The proof of Theorem \ref{fotetel}]
From Proposition \ref{atetel2}

\begin{equation}\label{univ}
(S,\{(T,\mathcal{M}_i)\}_{i \in N},g^\ast,\{f_i^\ast\}_{i \in N})
\end{equation}

\noindent is the universal type space.

Then Corollary \ref{egyetlen} implies that \cite{HeifetzSamet1998}'s universal type space and \eqref{univ} coincide (those are type isomorphic).

From Proposition \ref{teljeskell2}: \eqref{univ} is complete, \cite{Meier2001} also proves this result.

Finally, from Definition \ref{velemenyter}: \eqref{univ} contains all players' all hierarchies of beliefs.
\end{proof}

\section{Related papers}\label{comparison}

In this section our main result, Theorem \ref{fotetel}, is compared to the results of \cite{HeifetzSamet1999} and \cite{Pinter2010b}. These papers seem to contradict our main result, in the following, however, we show that this is not the case at all.

\bigskip

\cite{HeifetzSamet1999} as the title indicates, give an example for a hierarchy of beliefs, which can not be a type in any type space. Mathematically, their counterexample is based on an exercise of \cite{Halmos1974}, an example for an inverse system of probability measure spaces having no inverse limit. First, we summarize \cite{HeifetzSamet1999}'s example\footnote{We do not follow \cite{HeifetzSamet1999} letter by letter, we only grab the very essence of their example.}.

\begin{example}\label{plHS}
Notation: $l^\ast$ and $l_\ast$ are respectively the outer and inner measures induced by the Lebesgue measure. Let $\{A_n\}_n$ be the Vitali sets from the example of \cite{Halmos1974}, so it is true that for all $n$ $A_{n+1} \subseteq A_n \subseteq A_0 =  [0,1]$, $l^\ast(A_n) = 1$, $l_\ast (A_n) = 0$ and $\bigcap \limits_n A_n = \emptyset$. Moreover, for each $n$ let $\mu_n$ be the probability measures on $B \left( \prod \limits_{k=0}^n A_k \right)$\footnote{$B(\cdot)$ is for the Borel $\sigma$-field.} also from \cite{Halmos1974}'s example.

Consider the following inverse system of probability measure spaces:

\begin{equation}\label{kell}
\left( \left( \prod \limits_{k=0}^n A_k,B \left( \prod \limits_{k=0}^n A_k \right),\mu_n \right),\mathbb{N},pr_{mn} \right) \ ,
\end{equation}

\noindent where $pr_{mn} : \prod \limits_{k=0}^n A_k \rightarrow \prod \limits_{k=0}^m A_k$ is the coordinate projection.

Furthermore, if $X \circeq \prod \limits_{k=0}^n X_k$ is a product space and $\delta_x$ is the Dirac measure concentrated at $x \circeq (x_0,x_1,\ldots,x_n)$, then $\delta_x = \prod \limits_{k=0}^n \delta_{x_k}$, where $\delta_{x_k}$ is the Dirac measure on $B(X_k)$ concentrated at point $x_k$.

Interpretation: There are two players, we chose one of them. Let $A_0$ be the parameter space (the set of the states of nature). $A_1 \subseteq A_0$, and for each $x \in A_1$ let $x$ be $\delta_{x}$, i.e., $A_1$ is\footnote{Henceforth in the context like this "is" means that the two spaces are homeomorphic.} the set of some first order beliefs of the given player. Moreover, $A_2 \subseteq A_1$ and for all $x \in A_2$ let $x$ be $\delta^2_{x}$, where $\delta^2_x \circeq \delta_{\delta_{x}}$, i.e., $A_1 \times A_2$ is the set of some second order beliefs of the given player. In general, for each $n \geq 3$, $x \in A_n \subseteq A_{n-1}$: let $x$ be $\delta^n_x$, where $\delta^n_{x} \circeq \delta_{\delta^{n-1}_{x}}$, i.e., $\prod \limits_{k=1}^n A_k$ is the set of some $n$th order beliefs of the given player.

To sum up, for each $n$ $(a_0,a_1,\ldots,a_n) \in \prod \limits_{k=0}^n A_k$ is $(a_0,\delta_{a_1},\delta^2_{a_2},\ldots,\delta^n_{a_n}) = a_0 \times \delta_{(a_1,\ldots,a_n)}$.

Put it differently, $\prod \limits_{k=1}^\infty A_k$ is the space of some hierarchies of beliefs, therefore \eqref{kell} is a hierarchy of beliefs. However, from the example of \cite{Halmos1974} \eqref{kell} has no inverse limit, i.e., this hierarchy of beliefs is not a type.
\end{example}

Next we show that \cite{HeifetzSamet1999}'s hierarchy of beliefs is not in the purely measurable beliefs space $T$ (see Definition \ref{velemenyter}).

\begin{lemma}
\eqref{kell} is not in $T$.
\end{lemma}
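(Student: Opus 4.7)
The plan is to argue by contradiction, exploiting the crucial identification \eqref{fontos} from lemma \ref{indukalttipuster}. First I would suppose that the Heifetz--Samet hierarchy \eqref{kell} does lie in $T$, so that there is some point $\theta^i \in \Theta^i$ whose canonical projections $p_n^i(\theta^i)$ agree, under the iterated Dirac embeddings described in example \ref{plHS}, with Halmos's measures $\mu_n$ on $\prod_{k=0}^n A_k$ for every $n \in \mathbb{N}$.

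Next I would invoke the measurable isomorphism $\Theta^i \cong \Delta(S \otimes \Theta^{N\setminus\{i\}})$ supplied by \eqref{fontos}. Under this isomorphism $\theta^i$ corresponds to a single probability measure $\mu$ on $S \otimes \Theta^{N\setminus\{i\}}$, and the canonical projections $p_n^i$ correspond to the restriction maps $\mu \mapsto \mu|_{S \otimes \Theta_{n-1}^{N\setminus\{i\}}}$. Consequently, the restriction of $\mu$ to the Dirac-embedded copy of $\prod_{k=0}^n A_k$ inside $S \otimes \Theta_{n-1}^{N\setminus\{i\}}$ must equal $\mu_n$ for every $n$. In particular, pulling $\mu$ back along the embedding of $\prod_{k=0}^\infty A_k$ would produce a probability measure whose finite-dimensional marginals are exactly the Halmos measures $\mu_n$, i.e.\ a genuine measure-theoretic inverse limit of the system \eqref{kell}.

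This is precisely what Halmos \cite{Halmos1} (and \cite{HS4} reproducing the example) rules out: the simultaneous properties $l^\ast(A_n)=1$, $l_\ast(A_n)=0$, and $\bigcap_n A_n = \emptyset$ obstruct countable additivity of any candidate limit measure on $\prod_{k=0}^\infty A_k$. Hence no such $\theta^i$ exists, and \eqref{kell} cannot lie in $T$.

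The main obstacle I anticipate is the bookkeeping that identifies Halmos's inverse system with a substructure of the belief-space tower. Concretely, one has to verify that the iterated Dirac maps $x \mapsto \delta^k_x$ of example \ref{plHS} are measurable into the $\sigma$-field $\mathcal{A}^\ast$ on $\Theta_{k-1}^{N\setminus\{i\}}$, and that they intertwine Halmos's coordinate projections $pr_{nn+1}$ with the restriction maps $q^i_{mn}$ of definition \ref{velemenyter}. Once this intertwining is established, any hypothetical $\mu$ restricts cleanly to a limit measure for Halmos's system, and the contradiction with \cite{Halmos1} becomes immediate.
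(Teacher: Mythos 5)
You have a genuine gap, and it sits exactly at the step you set aside as ``bookkeeping.'' The contradiction you aim for can never materialize, because of a $\sigma$-field mismatch which is in fact the entire content of the lemma. Halmos's measures $\mu_n$ are defined on the \emph{Borel} $\sigma$-fields $B(\prod_{k=0}^n A_k)$, and his non-existence argument needs Borel sets: the decreasing cylinders $D_n$ of points whose first $n+1$ coordinates coincide satisfy $\mu_n(D_n)=1$ while $\bigcap_n D_n=\emptyset$, and it is this that defeats countable additivity of any candidate limit. The hypothetical measure $\mu\in\Delta(S\otimes\Theta^{N\setminus\{i\}})$ that \eqref{fontos} would hand you is, however, defined only on the purely measurable $\sigma$-field of definition \ref{merhetoseg}, and its pullback along the Dirac embedding of $\prod_{k=0}^\infty A_k$ is a measure only on the \emph{trace} of that $\sigma$-field, which is strictly coarser than the Borel one: the diagonal of $A_0\times A_1$, viewed inside $A_0\otimes\Delta(A_0)$, is not measurable there (this non-measurability is precisely what the paper proves). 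Hence the sets $D_n$ witnessing Halmos's obstruction are not events for your pulled-back measure, no violation of countable additivity arises, and your claim that ``any hypothetical $\mu$ restricts cleanly to a limit measure for Halmos's system'' is false --- it restricts only to a limit of the \emph{coarsened} system, which is harmless. Note also that your setup is self-defeating: the points of $\Theta^i=\varprojlim(\Theta_n^i,q^i_{mn})$ are by definition exactly the coherent sequences of measures on the purely measurable spaces, so if your intertwining step could be carried out --- each $\mu_n$ identified with an element of $\Theta_n^i$ compatibly with the maps $q^i_{nn+1}$ --- then that very sequence would \emph{be} a point of $\Theta^i$; you would have shown that \eqref{kell} \emph{is} in $T$, not reached a contradiction.

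What actually fails, and what the paper proves, is that the identification cannot even get started: \eqref{kell} is not one of the purely measurable systems \eqref{hb}, because its measures live on the wrong $\sigma$-fields. Concretely, $A_0\otimes\Delta_D(A_0)$ (the Dirac measures) is measurably isomorphic to $A_0\otimes A_0$, and by the Vitali properties of the sets $A_n$ the diagonal of $A_0\times A_1$ is not a measurable subset there; since the Borel domain of $\mu_1$ \emph{does} contain this diagonal (indeed $\mu_1$ is carried by it), $\mu_1$ is not an element of $\Delta(A_0\otimes\Delta(A_0))=\Theta_1^i$. So the hierarchy drops out of $T$ already at the first nontrivial level, by a direct domain argument that makes no appeal to the non-existence of inverse limits. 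Moreover, this is the only possible shape of a proof: consistently with theorem \ref{fotetel} and with the discussion of \cite{HS4} in section \ref{comparison}, the coarsened (purely measurable) version of Heifetz and Samet's hierarchy \emph{does} lie in $T$ and \emph{does} have an inverse limit, so no argument of your form --- ``in $T$ implies a limit exists, contradicting \cite{Halmos1}'' --- can succeed.
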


\begin{proof}
It is enough to show that the diagonal of $A_0 \times A_1$ is not a measurable subset of $A_0 \otimes \Delta (A_0)$. The strategy of the proof is the following: if the diagonal of $A_0 \times A_1$ is a measurable subset of $A_0 \otimes \Delta (A_0)$, then for any $B \subseteq A_0 \otimes \Delta (A_0)$ the intersection of the diagonal of $A_0 \times A_1$ and $B$ is a measurable set of subspace $B$.

Consider $A_0 \times A_0$, i.e., let $B \circeq [0,1] \times [0,1]$. Then from Example \ref{plHS} $B$ equipped with the subspace $\sigma$-field is measurable isomorphic to $A_0 \otimes \Delta_D (A_0)$, where $\Delta_D (A_0)$ is for the set of Dirac measures on $A_0$. Furthermore, it is clear that $B([0,1] \times [0,1])$ encompasses the subspace $\sigma$-field of $B$. However, from the definition of $\{A_n\}_n$ the diagonal of $A_0 \times A_1$ is not a Borel measurable subset of (the diagonal of) $[0,1] \times [0,1]$, i.e., it is not in the subspace $\sigma$-field of $B$, hence $\mu_1 \notin \Delta (A_0 \otimes \Delta (A_0))$, where $\mu_1$ is from Example \ref{plHS}.
\end{proof}

Even if we have showed above that \cite{HeifetzSamet1999}'s hierarchy of beliefs is not in the purely measurable beliefs space, they argue that it is a purely measurable hierarchy of beliefs, so we have to comment their argument. Without going into the details, we can say that the ''problem'' with \cite{HeifetzSamet1999}'s argument is the following: the Cartesian production is not commutative, i.e., e.g. (see their notations\footnote{$S = A_0 \times A_1 \times \cdots \times A_n \times \cdots$.})

\begin{equation*}
S^2 \neq  A_0 \times A_0 \times A_1 \times A_1 \times A_2 \times \cdots \times A_n \times A_{n+1} \times \cdots 
\end{equation*} 

This makes trouble because at certain points \cite{HeifetzSamet1999} use $S^2$ at other points they use $A_0 \times A_0 \times A_1 \times A_1 \times A_2 \times \cdots$. The first is needed for getting an inverse system of probability measure spaces having no inverse limit, so is the second for embedding the inverse system of probability measure spaces under consideration into the purely measurable beliefs space. 

It is easy to see that (e.g.) the notion of diagonal depends on the ordering, e.g. $\textup{Diag} ((A \times B) \times (A \times B)) \neq \textup{Diag} ((B \times A) \times (A \times B))$, which means a crucial flaw in \cite{HeifetzSamet1999}'s argument (they need the diagonal of $S^2$).

To sum up, \cite{HeifetzSamet1999}'s counterexample is a hierarchy of beliefs such that it is not among the purely measurable hierarchies of beliefs, i.e., it is not in the purely measurable beliefs space. Therefore, \cite{HeifetzSamet1999}'s result does not contradict ours.

\bigskip

Very recently, \cite{Pinter2010b} provides a negative result, he argues that there is no universal topological type space in the category of topological type spaces. Actually, this non-existence is got by a topological reasoning, hence this negative result does not contradict this paper's positive one.

On the other hand, \cite{Pinter2010b}'s result clearly shows that the irrelevant details, getting in the model by topological concepts, can make real difficulties, which culminate in that the goal proving that the Harsányi program works, is unreachable in the topological setting.

\section{Conclusion}

The main result of this paper, Theorem \ref{fotetel}, concludes that in the purely measurable framework the Harsányi program works, i.e., the incomplete information situations can be modeled by type spaces.

Theorem \ref{fotetel} together with \cite{Pinter2010b}'s result raise the problem that although in the literature mostly the topological models are popular, the purely measurable and not the topological framework is appropriate for modeling incomplete information situations. Can the results of the topological framework be translated into the purely measurable one? For this question future research can answer.

\appendix

\section{The proof of Lemma 5.1}\label{appendix}

In this appendix we prove that $\forall i \in N$, $\forall \theta^i \in \Theta^i$: the inverse system of probability measure spaces

\begin{equation}\label{hba}
((S \otimes \Theta_n^{N \setminus \{i\}},p_{n+1}^i (\theta^i)),\mathbb{N} \cup \{-1\},(id_S,q_{mn}^{N \setminus \{i\}}))
\end{equation} 

\noindent admits a unique inverse limit.

First we refer to a concept and a result, both from \cite{Pinter2010a}.

\begin{definition}\label{eros}
The inverse system of probability measure spaces $((X_n,$ $\mathcal{M}_n,$ $\mu_n),\mathbb{N},f_{mn})$ is $\varepsilon$-complete, if $\forall \varepsilon \in [0,1]$, $\forall m,n \in \mathbb{N}$ such that $m \leq n$, and $\forall A \subseteq X_{m}$:

\begin{equation*}
(\mu_n^\ast (f_{mn}^{-1} (A)) = \varepsilon) \Rightarrow (\mu_m^\ast (A) = \varepsilon) \ ,
\end{equation*}

\noindent where $\mu_n^\ast$ is the outer measure generated by $\mu_n$. 
\end{definition}

The following result is Theorem 3.2 from \cite{Pinter2010a}.

\begin{theorem}\label{lemmakell1}
Let $((X_n,\mathcal{M}_n,\mu_n),\mathbb{N},f_{mn})$ be an $\varepsilon$-complete inverse system of probability measure spaces. Then $(X,\mathcal{M},\mu) \circeq \varprojlim ((X_n,\mathcal{M}_n,\mu_n),\mathbb{N},f_{mn})$ exists and is unique.
\end{theorem}

\begin{remark}
It is worth noticing that in Theorem \ref{lemmakell1} we can substitute fields for the $\sigma$-fields $\mathcal{M}_n$.
\end{remark}

\begin{remark}\label{lemma1}
Let $(X,\mathcal{M},\mu)$ and $(Y,\mathcal{N},\nu)$ be probability measure spaces, and $f: X \to Y$ be a measurable mapping. Then the following two conditions are equivalent

\begin{equation*}
\inf \limits_{B \in \mathcal{M},\ f^{-1} (A) \subseteq B} \mu (B) = \inf \limits_{B \in \mathcal{N},\ A \subseteq B} \nu (B)\ , \mspace{50mu} A \subseteq Y\ ,
\end{equation*}

\noindent and

\begin{equation*}
\sup \limits_{B \in \mathcal{M},\ B \subseteq f^{-1} (A)} \mu (B) = \sup \limits_{B \in \mathcal{N},\ B \subseteq A} \nu (B)\ , \mspace{50mu} A \subseteq Y\ .
\end{equation*}
\end{remark}

Next, we give an "alternative" definition of $\varepsilon$-completeness.

\begin{corollary}\label{kov1}
The inverse system of probability measure spaces $((X_n,$ $\mathcal{M}_n,\mu_n),$ $\mathbb{N},f_{mn})$ is $\varepsilon$-complete if and only if $\forall m,n \in \mathbb{N}$ such that $m \leq n$, and $\forall A \subseteq X_m$:

\begin{equation*}
\sup \limits_{B \in \mathcal{M}_n,\ B \subseteq f^{-1}_{mn} (A)} \mu_n (B) = \sup \limits_{B \in \mathcal{M}_m,\ B \subseteq A} \mu_m (B) \ .
\end{equation*}
\end{corollary}

For the sake of simplicity, henceforth, we assume that there are only two players, and we consider the problem from the viewpoint of one of them. Then we get the following diagram (see Diagram \eqref{limit2}):

\begin{equation}\label{limit2a}
\begin{diagram}
\Theta & \mspace{100mu} & \Delta (S \otimes \Theta) \\
\dTo^{p_{n+1}} & & \dTo^{id_S} \mspace{35mu} \dTo_{p_n} \\
\Theta_{n+1} & \circeq & \Delta (S \otimes \Theta_{n}) \\
\dTo^{q_{nn+1}} & & \dTo^{id_S} \mspace{35mu} \dTo_{q_{n-1n}} \\
\Theta_{n} & \circeq & \Delta (S \otimes \Theta_{n-1}) \\
\end{diagram}
\end{equation}

Since here we consider the case where there are only two players we focus on the following inverse system of probability measure spaces:

\begin{equation}\label{hb0}
((S \times \Theta_n,p_{n+1}),\mathbb{N} \cup \{-1\},(id_S,q_{mn})) \ .
\end{equation} 

Consider the ''truncated'' variant of \eqref{hb0}:

\begin{equation}\label{hb1}
((\Theta_n,p_{n+1} (\theta)|_{\Theta_n}),\mathbb{N} \cup \{-1\},q_{mn}) \ .
\end{equation} 

The following lemma is a direct corollary of \cite{MarczewskiRyll1953}'s result (see e.g. \cite{Bogachev2006b} Problem 7.14.100 on p. 161).

\begin{lemma}\label{MRN}
Let $(X,\mathcal{M})$ be a measurable space and $\mathcal{A} \subseteq \mathcal{P} (X)$ be a field such that $\# \mathcal{A} < \infty$. Moreover, let $\mu$ be an additive probability set function on the field generated by $\mathcal{A} \cup \mathcal{M}$ such that $\mu|_{(X,\mathcal{M})}$, the restriction of $\mu$ onto the measurable space $(X,\mathcal{M})$ is $\sigma$-additive. Then $\mu$ is $\sigma$-additive.
\end{lemma}

Notice that for any measurable space $(X,\mathcal{M})$, $\Delta (X,\mathcal{M}) \subset [0,1]^\mathcal{M}$. Therefore, $\Delta (X,\mathcal{M})$ can be equipped with the pointwise convergence topology as a subspace of $[0,1]^\mathcal{M}$. Henceforth, let $\tau_p$ denote the pointwise convergence topology.

The next lemma shows that the inverse system of probability measure spaces \eqref{hb1} can be "embedded" uniquely into the following inverse system of probability measure spaces:

\begin{equation}\label{hb11}
((\Theta_n,B(\Theta_n,\tau_p),\mu_n),\mathbb{N} \cup \{-1\},q_{mn}) \ ,
\end{equation} 

\noindent where $B(\Theta_n,\tau_p)$ is for the Borel $\sigma$-field of the pointwise convergence topology and $\mu_n$ is an inner $\tau_p$-closed regular measure such that $\mu_n|_{\Theta_n} = p_{n+1} (\theta)|_{\Theta_n}$.
 
\begin{lemma}\label{fontos4}
Let $\mu$ be a probability measure on $(\Delta (X,\mathcal{M}),\mathcal{A}^\ast)$, where $\mathcal{A}^\ast$ is given in Definition \ref{merhetoseg}. Moreover, let $\tau_p$ be the pointwise convergence topology on $\Delta (X,\mathcal{M})$. Then there exists a unique probability measure $\nu$ on $B(\Delta (X,\mathcal{M}),\tau_p)$ such that 

\begin{enumerate}
\item  $\mu = \nu|_{(\Delta (X,\mathcal{M}),\mathcal{A}^\ast)}$,

\item  $\nu$ is inner $\tau_p$-closed regular, i.e., $\forall A \in B(\Delta (X,\mathcal{M}),\tau_p)$, $\forall \varepsilon > 0$: $\exists Z$ $\tau_p$-closed set such that $Z \subseteq A$ and $\nu (A \setminus Z) < \varepsilon$.
\end{enumerate}
\end{lemma}

\begin{proof}
First, from Definition \ref{merhetoseg} $\forall A \in \mathcal{A}$, $\forall \varepsilon > 0$: $\exists Z \in \mathcal{A}^\ast$ $\tau_p$-closed set such that $Z \subseteq A$ and $\mu (A \setminus Z) < \varepsilon$. 

Let $a (X,\mathcal{M})$ be the set of the additive probability set functions on the measurable space $(X,\mathcal{M})$. Notice that $a (X,\mathcal{M})$ is a compact topological space with the pointwise convergence topology. Let 

\begin{equation*}
\mathcal{A}^{\ast \ast} \circeq \sigma(\{\{ \mu \in a(X,\mathcal{M}) \mid \mu (A) \geq p\}, \ A \in \mathcal{M}, \ p \in [0,1]\}) \ .  
\end{equation*}

\noindent Then $\forall A \subseteq \mathcal{A}^{\ast \ast}$: $i^{-1} (A) \in \mathcal{A}^\ast$, where $i : \Delta (X,\mathcal{M}) \to a (X,\mathcal{M})$ is the natural embedding; moreover let $\mu^\ast \circeq \mu \circ i^{-1}$ be a probability measure on $(a (X,\mathcal{M}),$ $\mathcal{A}^{\ast \ast})$. Notice that  $\forall A \in \mathcal{A}^{\ast \ast}$, $\forall \varepsilon > 0$: $\exists Z \in \mathcal{A}^{\ast \ast}$ compact set in the pointwise convergence topology such that $Z \subseteq A$ and $\mu^\ast (A \setminus Z) < \varepsilon$.

Furthermore, notice that $\mathcal{A}^{\ast \ast}$ contains the base of the pointwise convergence topology on $a (X,\mathcal{M})$, hence Henry's Extension Theorem (see e.g. \cite{Rao1987} Theorem 10 on pp. 76-78 and Exercise 11 on p. 80) implies that there exists a unique probability measure $\nu^\ast$ on the Borel $\sigma$-field of the pointwise convergence topology on $a (X,\mathcal{M})$ such that $\mu^\ast = \nu^\ast_{(a (X,\mathcal{M}),\mathcal{A}^\ast)}$, and $\nu^\ast$ is inner compact regular, i.e.,  for any set $A$ of the Borel $\sigma$-field of the pointwise convergence topology on $a (X,\mathcal{M})$, $\forall \varepsilon > 0$: $\exists Z$ compact set in the pointwise convergence topology such that $Z \subseteq A$ and $\nu^\ast (A \setminus Z) < \varepsilon$.

Let $\nu$ be defined by $\nu \circ i^{-1} \circeq \nu^\ast$; then $\nu$ is a probability measure on the Borel $\sigma$-field  of the pointwise convergence topology on $\Delta (X,\mathcal{M})$, $\mu = \nu|_{(\Delta (X,\mathcal{M}),\mathcal{A}^\ast)}$, and $\nu$ is inner $\tau$-closed regular, i.e., $\forall A \in B(\Delta (X,\mathcal{M}),\tau)$, $\forall \varepsilon > 0$: $\exists Z$ $\tau$-closed set such that $Z \subseteq A$ and $\nu (A \setminus Z) < \varepsilon$.
\end{proof}

The next lemma demonstrates that the mappings $q_{mn}$ in \eqref{hb11} are closed mappings.

\begin{lemma}\label{fontos3}
Let $(X,\mathcal{M}_1)$ and $(X,\mathcal{M}_2)$ be measurable spaces such that $\mathcal{M}_1 \subseteq \mathcal{M}_2$. Moreover, let $\Delta(X,\mathcal{M}_1)$ and $\Delta(X,\mathcal{M}_2)$ be equipped with the pointwise convergence topologies $\tau_1$ and $\tau_2$ respectively, and let $f : \Delta(X,$ $\mathcal{M}_2) \to \Delta (X,$ $\mathcal{M}_1)$ be defined as follows, $\forall \nu \in \Delta(X,\mathcal{M}_2)$: $f(\nu) \circeq \nu|_{(X,\mathcal{M}_1)}$. Then $f$ is a closed-mapping, i.e., for any $\tau_2$-closed set $Z$: $f(Z)$ is $\tau_1$-closed. 
\end{lemma}

\begin{proof}
In Diagram \ref{komp} $a(X,\cdot)$ is for the set of the additive probability set functions on $(X,\cdot)$, $i_1$ and $i_2$ are for the natural injections (embeddings), and $f_a : a (X,\mathcal{M}_2) \to a (X,\mathcal{M}_1)$ is defined as follows: $\forall \nu \in a (X,\mathcal{M}_2)$: $f_a (\nu) = \nu|_{(X,\mathcal{M}_1)}$.

\begin{equation}\label{komp}
\begin{diagram}
\Delta (X,\mathcal{M}_2) &\rTo^f & \Delta (X,\mathcal{M}_1) \\
\dTo^{i_2} &  \mspace{200mu} & \dTo^{i_1} \\
a (X,\mathcal{M}_2) & \rTo^{f_a} & a (X,\mathcal{M}_1)
\end{diagram}
\end{equation}

First notice that the both $a (X,\mathcal{M}_1)$ and $a (X,\mathcal{M}_2)$ are compact topological spaces (with the pointwise convergence topology), and $f$, $f_a$, $i_1$ and $i_2$ are continuous mappings.

Moreover, notice that for any $\tau_2$-closed set $Z$: 

\begin{equation}\label{kell10}
f_a (\overline{i_2 (Z)}) = \overline{i_1 (f (Z))} \ ,
\end{equation}

\noindent where $\overline{i_2 (Z)}$ is for the pointwise convergence topology closure of set $Z$ in $a (X,$ $\mathcal{M}_2)$ and $\overline{i_1 (f (Z))}$ is for the the pointwise convergence topology closure of set $f(Z)$ in $a (X,\mathcal{M}_1)$.

Let $Z$ be a $\tau_2$-closed set, and indirectly assume that $f(Z)$ is not $\tau_1$-closed. Let $\mu \in \Delta(X,\mathcal{M}_1)$ be such that $\mu \in \overline{f(Z)} \setminus f (Z)$, where $\overline{f(Z)}$ is the $\tau_1$-closure of $f(Z)$. From \eqref{kell10} $\mu \in f_a (\overline{i_2 (Z)})$, i.e., there exists $\nu \in \overline{i_2 (Z)}$ such that $\mu = \nu|_{(X,\mathcal{M}_2)}$. 

Let $\mathcal{A} \subseteq \mathcal{M}_2$ be a field such that $\# \mathcal{A} < \infty$. Then for any $\varepsilon > 0$: $\{ \mu' \in Z : | \mu' (A) -\nu (A) | < \varepsilon, \ A \in \mathcal{A} \} \neq \emptyset$. However, Lemma \ref{MRN} implies that there exists $\mu^\ast \in \Delta (X,\mathcal{M}_2)$ such that $\nu = \mu^\ast|_{(X,(\mathcal{A} \cup \mathcal{M}_1))}$, where $(\mathcal{A} \cup \mathcal{M}_1)$ is the field generated by $\mathcal{A} \cup \mathcal{M}_1$, i.e., $f^{-1} (\{ \mu \}) \nsubseteq \complement{Z}$\footnote{$\complement{Z}$ is for the complement of set $Z$.}, which is a contradiction.
\end{proof}

Notice that the measurable spaces of the inverse system of probability measure spaces \eqref{hb0} are products of measurable spaces of the inverse systems of probability measure spaces 

\begin{equation}\label{hb3}
((S,\mathcal{A},p_{n+1} (\theta)|_{(S,\mathcal{A})}),\mathbb{N} \cup \{-1\},id_S)
\end{equation}

\noindent and \eqref{hb1}.

Consider the following inverse system 

\begin{equation}\label{hb4}
((S \times \Theta_n,\mathcal{A} \times \mathcal{A}^\ast_n,p_{n+1} (\theta))|_{\mathcal{A} \times \mathcal{A}^\ast_n},\mathbb{N} \cup \{-1\},(id_S,q_{mn})) \ ,  
\end{equation}

\noindent where $\mathcal{A} \times \mathcal{A}^\ast_n$ is for the field generated by the cylindrical sets of $\mathcal{A}$ (the $\sigma$-field on $S$) and $\mathcal{A}^\ast_n$ (the $\sigma$-field on $\Theta_n$).

Moreover, notice that the inverse system \eqref{hb4} can be "embedded" uniquely into the inverse system

\begin{equation}\label{hb5}
((S \times \Theta_n,\mathcal{A} \times B(\Theta_n,\tau_p),\nu_n),\mathbb{N} \cup \{-1\},(id_S,q_{mn})) \ ,  
\end{equation}

\noindent where $\mathcal{A} \times B(\Theta_n,\tau_p)$ is for the field generated by the cylindrical sets of $\mathcal{A}$ and $B(\Theta_n,\tau_p)$, $\nu_n$ is a probability measure such that $\nu_n|_{\mathcal{A} \times \mathcal{A}^\ast_n} = p_{n+1} (\theta)|_{\mathcal{A} \times \mathcal{A}^\ast_n}$ and $\nu_n|_{B(\Theta_n,\tau_p)} = \mu_n$ ($\mu_n$ is from \eqref{hb11}). That is, the inverse system \eqref{hb5} is the "product" of the inverse system of probability measure spaces \eqref{hb3} and \eqref{hb11}.  

\begin{proof}[The proof of Lemma 5.1]
We prove the two player case, see Diagram \ref{limit2a}; the proof of the general case is a slight modification of that of the two player case.

It is clear that it is enough to show that the inverse system \eqref{hb5} admits a unique inverse limit.

Let $n > m$ and $B_j \times C_j$, $j=1,\ldots,k$, where $B_j \in \mathcal{A}$ and $C_j \in B(\Theta_n,\tau_p)$. Then Lemma \ref{fontos4} implies that $C_j$ can be a closed set, therefore from Lemma \ref{fontos3}

\begin{equation*}
(id_S,q_{mn}) (B_j \times C_j) \in \mathcal{A} \times B(\Theta_n,\tau_p) \ , \mspace{10mu} j=1,\ldots,k \ ,
\end{equation*}

\noindent and 

\begin{equation*}
\nu_m \left( \bigcup \limits_{j=1}^k (id_S,q_{mn})(B_j \times C_j) \right) \geq \nu_n \left( \bigcup \limits_{j=1}^k B_j \times C_j \right) \ .
\end{equation*}

Therefore the inverse system \eqref{hb5} is $\varepsilon$-complete, so Theorem \ref{lemmakell1} implies that it admits a unique inverse limit.
\end{proof}


\end{document}